\documentclass[]{interact}

\usepackage{amsmath}
\usepackage{amssymb}
\usepackage{booktabs}
\usepackage{tabularx}
\usepackage{array}
\usepackage{placeins}
\usepackage{url}
\usepackage[table]{xcolor}
\usepackage{tikz}
\usepackage{pgfplots}
\usepackage{graphicx}
\usepackage{algorithm}
\usepackage{algpseudocode}
\pgfplotsset{compat=1.18}

\definecolor{PlotBlue}{RGB}{0,114,178}
\definecolor{PlotOrange}{RGB}{230,159,0}
\definecolor{PlotGreen}{RGB}{0,158,115}
\definecolor{PlotPurple}{RGB}{204,121,167}

\newcolumntype{Y}{>{\raggedright\arraybackslash}X}

\usepackage{natbib}
\bibpunct[, ]{(}{)}{;}{a}{}{,}
\renewcommand\bibfont{\fontsize{10}{12}\selectfont}
\usepackage[hidelinks,hypertexnames=false]{hyperref}
\hypersetup{
  pdftitle={Cycle time minimization for the simple assembly line balancing problem under peak power constraints},
  pdfauthor={Bao Gia Hoang, Tuyen Van Kieu, and Khanh Van To},
  pdfsubject={Research Article submitted to Engineering Optimization},
  pdfkeywords={assembly line balancing; cycle time minimization; peak power constraint; satisfiability; exact optimization}
}

\theoremstyle{plain}
\newtheorem{lemma}{Lemma}[section]
\newtheorem{proposition}[lemma]{Proposition}

\newcommand{\Problem}{SALBP-PPC}
\newcommand{\Qmax}{Q_{\max}}
\newcommand{\Cinit}{C^{\mathrm{init}}}
\newcommand{\LBzero}{\ensuremath{LB_0}}
\newcommand{\Status}[1]{\ensuremath{\mathsf{#1}}}

\begin{document}

\articletype{ARTICLE}

\title{Cycle time minimization for the simple assembly line balancing problem under peak power constraints}

\author{
\name{Bao Gia Hoang\textsuperscript{a}, Tuyen Van Kieu\textsuperscript{a}, and Khanh Van To\textsuperscript{a}\thanks{CONTACT Khanh Van To. Email: khanhtv@vnu.edu.vn}}
\affil{\textsuperscript{a}VNU University of Engineering and Technology, Hanoi, Vietnam}
\affil{Institutional emails: Bao Gia Hoang, \href{mailto:23020653@vnu.edu.vn}{23020653@vnu.edu.vn}; Tuyen Van Kieu, \href{mailto:tuyenkv@vnu.edu.vn}{tuyenkv@vnu.edu.vn}; Khanh Van To, \href{mailto:khanhtv@vnu.edu.vn}{khanhtv@vnu.edu.vn}}
\affil{ORCID iDs: Bao Gia Hoang, \href{https://orcid.org/0009-0004-1400-7358}{0009-0004-1400-7358}; Tuyen Van Kieu, \href{https://orcid.org/0009-0007-7800-7172}{0009-0007-7800-7172}; Khanh Van To, \href{https://orcid.org/0009-0008-1907-7848}{0009-0008-1907-7848}}
}

\maketitle

\noindent{\footnotesize\textbf{Word count:} 6,823 words, including tables, references, and figure captions.}\par\medskip

\begin{abstract}
Peak power limits restrict concurrent tasks and may increase assembly-line cycle time. To the best of our knowledge, this study is the first to minimize cycle time for the simple assembly line balancing problem type 2 (SALBP-2) with a fixed number of workstations and a fixed limit on total instantaneous power. An exact satisfiability (SAT) method finds feasible schedules, searches systematically for shorter cycles, and proves optimality when possible. Two reproducible formulas define looser and tighter power limits for 72 cases based on the standard SALBP library, with Gurobi and CPLEX providing commercial MIP and CP comparisons. Relative to standard SALBP-2 optima, the two limits increase the best-known cycle time by 16.49\% and 61.02\% on average. The best-performing SAT configurations find a feasible solution for every case, solve more cases to optimality than each commercial solver, and when both prove optimality, are almost always faster under the reported settings.
\end{abstract}

\begin{keywords}
assembly line balancing; cycle time minimization; peak power constraint; satisfiability; exact optimization
\end{keywords}

\section{Introduction}

The simple assembly line balancing problem (SALBP) assigns precedence-constrained tasks to ordered workstations. In SALBP-2, the number of workstations is fixed and the cycle time is minimized, thereby maximizing the production rate. Exact and heuristic procedures for SALBP-2 continue to be developed \citep{Scholl2006,Li2021Enhanced,AlvarezMiranda2024BBR}.

Energy-aware line balancing includes power or energy data for each task. The simple assembly line balancing problem with power-peak minimization (SALB3PM) fixes the number of workstations and the cycle time, then minimizes the highest total power drawn by the line at any time \citep{Gianessi2019Simple}. Later mathematical-programming, heuristic, satisfiability (SAT), and maximum-satisfiability methods reduce this peak without changing the cycle time \citep{Delorme2023A,Py2024POS,Zheng2024Optimizing,VanKieu2025Compact}.

A fixed peak power limit creates a different problem: given the number of workstations and the limit, the line designer seeks the shortest feasible cycle. Machine-scheduling studies also limit total power at each time while minimizing the overall completion time, or makespan \citep{Fang2013Flow,Wang2019Decoding,Lv2024Considering,Yuan2026Flexible}. However, they use job routes or machine assignments instead of assigning precedence-constrained tasks to ordered SALBP workstations.

This leaves an open problem in assembly-line balancing. To the best of our knowledge, no previous SALBP study has minimized cycle time when both the number of workstations and the limit on total instantaneous power are fixed. We call this problem the \emph{simple assembly line balancing problem with peak power constraints} (\Problem).

Building on earlier SAT models \citep{Py2024POS,VanKieu2025Compact}, this study makes three contributions. First, it defines cycle time minimization under a fixed peak power limit and gives two reproducible formulas for setting the test limits. Second, it develops an exact SAT method that finds a feasible cycle, repeatedly searches for a shorter one, and proves optimality when possible. Third, it presents three ways to prevent tasks at the same workstation from overlapping and compares the resulting SAT methods with commercial mixed-integer programming (MIP) and constraint programming (CP) solvers on 72 benchmark cases.

\section{Related work}

Classical SALBP-2 fixes the number of workstations and minimizes cycle time. \citet{Scholl2006} review its exact and heuristic methods. Among recent exact methods, \citet{Li2021Enhanced} repeatedly check related SALBP-1 problems, whereas \citet{AlvarezMiranda2024BBR} develop a direct branch, bound and remember algorithm for SALBP-2. Neither method models task power or the power drawn by tasks running at the same time.

The type-II resource-constrained assembly line balancing problem has the closest objective. It also fixes the number of workstations and minimizes cycle time, but it assigns limited equipment or other task-specific resources to workstations \citep{Michels2022Resource}. Its resource limits depend on workstation assignment. In contrast, a peak power limit depends on which tasks run at the same time across all workstations.

Energy-aware assembly-line studies optimize energy alone or together with production measures in robotic, mixed-model, parallel, and controlled lines \citep{Ramli2022Review}. Their objectives include cycle time and total energy \citep{Nilakantan2015Energy,Kubilay2026Constraint}, energy use and balance rate \citep{Zhang2020A}, or total cycle time, energy cost, and grid peaks \citep{Liberati2022Energy}. None minimizes cycle time alone under a fixed limit on total line power. Table~\ref{tab:related-work} summarizes the differences.

\begin{table}[t!]
\centering
\caption{Comparison of related problem classes by objective, peak-power treatment, and scheduling structure.}
\label{tab:related-work}
\footnotesize
\renewcommand{\arraystretch}{1.18}
\setlength{\tabcolsep}{3pt}
\begin{tabularx}{\textwidth}{@{}
>{\raggedright\arraybackslash\hsize=0.88\hsize}X
>{\raggedright\arraybackslash\hsize=0.72\hsize}X
>{\raggedright\arraybackslash\hsize=0.95\hsize}X
>{\raggedright\arraybackslash\hsize=1.10\hsize}X
>{\raggedright\arraybackslash\hsize=1.35\hsize}X
@{}}
\toprule
\shortstack[l]{Production\\setting} &
Objective(s) &
\shortstack[l]{Peak-power\\treatment} &
\shortstack[l]{Assignment or\\routing structure} &
\shortstack[l]{Representative\\studies} \\
\midrule
\multicolumn{5}{@{}p{\textwidth}@{}}{\textit{Assembly-line balancing literature}} \\
\addlinespace[2pt]
Simple assembly line & Cycle time & Not modeled & Ordered workstation assignment
& \citep{Scholl2006,Li2021Enhanced,AlvarezMiranda2024BBR} \\
\addlinespace[1.5pt]
Resource-constrained assembly line & Cycle time & No limit on total instantaneous power & Ordered workstations with assignable resources
& \citep{Michels2022Resource} \\
\addlinespace[1.5pt]
Energy-aware assembly lines & Cycle time and/or energy measures & Energy use or grid peaks enter the objective; no fixed limit on total line power & Robotic, mixed-model, parallel, or controlled lines
& \citep{Nilakantan2015Energy,Zhang2020A,Kubilay2026Constraint,Liberati2022Energy} \\
\addlinespace[1.5pt]
Simple assembly line with task timing & Peak power (fixed $C$) & Peak power is minimized & Ordered workstation assignment and task start times
& \citep{Gianessi2019Simple,Lamy2020IFAC,Py2024POS,Zheng2024Optimizing,VanKieu2025Compact} \\
\addlinespace[4pt]
\multicolumn{5}{@{}p{\textwidth}@{}}{\textit{Adjacent production-line scheduling literature}} \\
\addlinespace[2pt]
Asynchronous production line & Production during demand response & Limit during demand-response intervals & Changes between production states
& \citep{Desta2018Demand} \\
\addlinespace[4pt]
\multicolumn{5}{@{}p{\textwidth}@{}}{\textit{Peak-constrained machine scheduling literature}} \\
\addlinespace[2pt]
Flow shop & Makespan & Fixed peak power limit & Fixed machine routes
& \citep{Fang2013Flow,Wang2019Decoding,Lv2024Considering} \\
\addlinespace[1.5pt]
Job-shop-like system & Total completion time & Power limit changes over time & Job-specific machine routes
& \citep{Kemmoe2017Jobshop} \\
\addlinespace[1.5pt]
Job shop & Electricity cost & Peak power limit & Job-specific machine routes
& \citep{Masmoudi2019Jobshop} \\
\addlinespace[1.5pt]
Job shop & Makespan and/or energy & Peak power limit & Job-specific machine routes
& \citep{Carlucci2023JobShop,Homayouni2025Optimizing} \\
\addlinespace[1.5pt]
Flexible job shop & Makespan; or makespan, energy, and cost & Fixed peak power limit & Alternative machine assignments and job routes
& \citep{Wang2024Multi-objective,Yuan2026Flexible} \\
\addlinespace[4pt]
\multicolumn{5}{@{}p{\textwidth}@{}}{\textit{Combination considered in this study}} \\
\addlinespace[2pt]
\rowcolor{gray!10}Simple assembly line with task timing & \textbf{Cycle time} & \textbf{Fixed limit on total line power} & Ordered workstation assignment and task start times & \textbf{This study} \\
\bottomrule
\end{tabularx}
\end{table}

\FloatBarrier

SALB3PM links workstation assignment, task start times, and peak power. Later studies consider no-idle task sequences \citep{Lamy2020IFAC}, reconfigurable systems and solutions represented by task permutations \citep{Delorme2024Line,Delorme2023A}, SAT and MaxSAT methods \citep{Py2024POS,Zheng2024Optimizing}, a compact SAT formulation \citep{VanKieu2025Compact}, and search from several starting solutions \citep{Araujo2025SALB3PM}. Unlike \Problem{}, these models fix cycle time and minimize peak power \citep{Gianessi2019Simple}.

Power-constrained scheduling provides a second comparison. Asynchronous-line models schedule changes in machine operating states under demand-response limits \citep{Desta2018Demand}. Flow-shop studies minimize makespan under limits on total power at each time \citep{Fang2013Flow,Wang2019Decoding,Lv2024Considering}. Job-shop models combine fixed or time-varying power limits with completion-time, electricity-cost, makespan, or energy objectives \citep{Kemmoe2017Jobshop,Masmoudi2019Jobshop,Carlucci2023JobShop,Homayouni2025Optimizing}. Flexible job-shop models minimize makespan alone or together with energy and cost \citep{Yuan2026Flexible,Wang2024Multi-objective}. These models use job-specific routes and rules about which machines can perform each operation; they do not assign tasks to ordered SALBP workstations.

No existing model combines all of these features. Classical and resource-constrained type-II assembly-line models fix the number of workstations and minimize cycle time, but do not limit the total power of tasks running at the same time. Energy-aware assembly-line and SALB3PM models instead optimize energy or peak power, or keep cycle time fixed. Machine-scheduling models can limit instantaneous power, but do not use ordered SALBP workstation assignment. \Problem{} combines that assignment rule with a fixed number of workstations and a fixed power limit while minimizing cycle time.

\FloatBarrier

\section{Problem definition}

Let $N=\{1,\ldots,n\}$ be the task set and $K=\{1,\ldots,m\}$ the ordered workstation set, where $n,m\in\mathbb{Z}_{>0}$. Below, \emph{station} is used as a shorter synonym for workstation. Task $i\in N$ has integer duration $t_i\in\mathbb{Z}_{>0}$ and constant integer power demand $w_i\in\mathbb{Z}_{>0}$. The precedence graph is a directed acyclic graph with arc set $E\subseteq N\times N$: $(i,j)\in E$ means that task $i$ must come before task $j$. The number of workstations $m$ and the total-power limit $\Qmax\in\mathbb{Z}_{>0}$ are fixed, whereas the cycle time $C\in\mathbb{Z}_{>0}$ is minimized.

\begin{table}[H]
\centering
\caption{Problem and SAT-model notation.}
\label{tab:notation}
\footnotesize
\renewcommand{\arraystretch}{1.15}
\begin{tabularx}{\textwidth}{@{} l Y @{}}
\toprule
Symbol & Meaning \\
\midrule
\multicolumn{2}{@{}l}{\textit{Schedule and peak-power limit}} \\
\addlinespace[2pt]
$a(i)$ & Workstation assigned to task $i$. \\
$\sigma_i$ & Integer start time of task $i$ within the cycle. \\
$C$ & Positive-integer cycle time; the objective is to minimize it. \\
$H$; $H_i$ & Time slots $\{0,\ldots,C-1\}$; feasible starts $\{0,\ldots,C-t_i\}$. \\
$LB_Q$; $UB_Q$ & Lower and upper bounds used to construct the peak power limit. \\
\addlinespace[4pt]
\multicolumn{2}{@{}l}{\textit{SAT variables}} \\
\addlinespace[2pt]
$X_{i,k}$ & Task $i$ is assigned to workstation $k$. \\
$S_{i,s}$ & Task $i$ starts at time $s$. \\
$A_{i,\tau}$ & Task $i$ is active in slot $\tau$. \\
$R_{i,k}$ & Task $i$ is assigned to station $k$ or earlier. \\
$T_{i,u}$ & Task $i$ starts no later than time $u$. \\
$D_{i,j}$ & Tasks $i$ and $j$ share a workstation (SS and SS-D only). \\
\bottomrule
\end{tabularx}
\end{table}

A schedule consists of a workstation assignment $a:N\rightarrow K$ and start times $\sigma_i\in\mathbb{Z}_{\geq 0}$, $i\in N$. Task $i$ runs from time $\sigma_i$ up to, but not including, time $\sigma_i+t_i$; this interval is written $[\sigma_i,\sigma_i+t_i)$. The schedule is feasible for cycle time $C$ if
\begin{align}
& a(i)\in K,\quad 0\leq\sigma_i\leq C-t_i && i\in N, \label{eq:domain}\\
& a(i)\leq a(j) && (i,j)\in E, \label{eq:station-prec}\\
& a(i)=a(j)\Rightarrow \sigma_i+t_i\leq\sigma_j && (i,j)\in E, \label{eq:time-prec}\\
& a(i)=a(j)\Rightarrow\sigma_j\geq\sigma_i+t_i\text{ or }\sigma_i\geq\sigma_j+t_j && i<j,\,i,j\in N. \label{eq:nonoverlap}
\end{align}
Condition \eqref{eq:domain} places every task at one workstation and within $[0,C)$. Conditions \eqref{eq:station-prec} and \eqref{eq:time-prec} enforce the required station order and, for related tasks at the same station, their time order. Condition \eqref{eq:nonoverlap} prevents tasks at the same station from overlapping. Define the activity indicator
\[
I_{i,\tau}=\begin{cases}1,&\sigma_i\leq\tau<\sigma_i+t_i,\\0,&\text{otherwise},\end{cases}
\qquad i\in N,\ \tau\in H.
\]
The peak power constraint (PPC) limits the total power of all active tasks in every slot:
\begin{equation}
\sum_{i\in N}w_i I_{i,\tau}\leq\Qmax
\qquad \tau\in H.
\label{eq:power-cap}
\end{equation}
Together, \eqref{eq:domain}--\eqref{eq:power-cap} define a feasible \Problem{} schedule.
The objective is to minimize $C$ subject to these constraints. By \eqref{eq:domain}, every feasible solution satisfies $C\geq\max_{i\in N}(\sigma_i+t_i)$. If $C$ is larger than this value, it can be reduced to the time when the last task finishes without changing any assignment, precedence, non-overlap, or power constraint. Hence an optimal solution satisfies
\[
C=\max_{i\in N}(\sigma_i+t_i),
\]
so minimizing cycle time is the same as minimizing the time when the last task finishes.

\subsection{Illustrative schedule}

Consider five tasks with $(t_i,w_i)=(2,5),(4,3),(3,6),(3,4),(2,5)$, listed by task index, and precedence chains $1\prec3\prec5$ and $2\prec4$. Figure~\ref{fig:example} shows the assignment $a=(1,2,1,2,1)$ and start-time vector $\sigma=(0,0,2,4,5)$. Tasks at workstations 1 and 2 therefore occupy $[0,2),[2,5),[5,7)$ and $[0,4),[4,7)$, respectively. Both precedence chains are respected, and tasks do not overlap within either workstation. The resulting line power is 8 on $[0,2)$, 9 on $[2,4)$, 10 on $[4,5)$, and 9 on $[5,7)$. Thus its peak is $10<\Qmax=11$ and its cycle time is $C=7$.

\begin{figure}[H]
\centering
\includegraphics[width=0.85\textwidth]{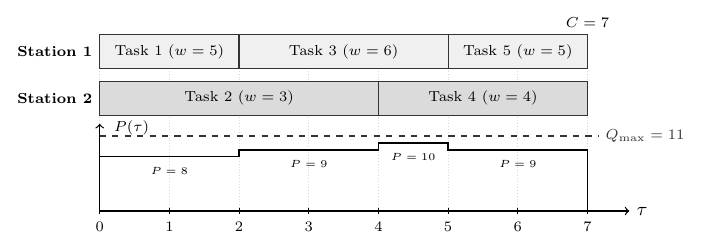}
\caption{A feasible two-station schedule with cycle time $C=7$. Its total power never exceeds $10$, below $\Qmax=11$.}
\label{fig:example}
\end{figure}

\section{Optimization models and solution methods}

For fixed $C$, \Problem{} asks whether a feasible schedule exists. This section presents the SAT model, three non-overlap formulations, the search for the smallest $C$, and the MIP and CP comparison models.

\subsection{SAT feasibility model for a fixed cycle time}

Any $C<\max_i t_i$ is infeasible, so assume $C\geq\max_i t_i$. Let $H=\{0,\ldots,C-1\}$ and $H_i=\{0,\ldots,h_i\}$, where $h_i=C-t_i$. Boolean variables $X_{i,k}$, $S_{i,s}$, and $A_{i,\tau}$ state whether task $i$ uses station $k$, starts at $s$, and is active in slot $\tau$. Cumulative variables record choices at or before a given position:
\begin{align*}
R_{i,k}&\equiv\bigvee_{\ell=1}^{k}X_{i,\ell}
&& i\in N,\ k=1,\ldots,m-1,\\
T_{i,u}&\equiv\bigvee_{s=0}^{u}S_{i,s}
&& i\in N,\ u=0,\ldots,h_i-1.
\end{align*}
Thus, $R_{i,k}$ means station $k$ or earlier, and $T_{i,u}$ means start time $u$ or earlier. They adapt the compact SAT encoding (CSE) for SALB3PM \citep{VanKieu2025Compact}, using a linear-size sequential encoding for exactly-one choices \citep{sinz_sequential}. Let $\Phi(C,\Qmax)$ denote the model. Its pseudo-Boolean (PB) power constraints are weighted sums of Boolean variables.

For $m\geq2$, exactly one station is selected by
\begin{alignat}{2}
& R_{i,1}\Leftrightarrow X_{i,1}, && i\in N, \tag{CSE-1}\\
& R_{i,k-1}\Rightarrow R_{i,k}, && i\in N,\ k=2,\ldots,m-1, \tag{CSE-2}\\
& X_{i,k}\Rightarrow R_{i,k}, && i\in N,\ k=2,\ldots,m-1, \tag{CSE-3}\\
& X_{i,k}\Rightarrow\neg R_{i,k-1}, && i\in N,\ k=2,\ldots,m-1, \tag{CSE-4}\\
& R_{i,k}\wedge\neg R_{i,k-1}\Rightarrow X_{i,k}, &\qquad& i\in N,\ k=2,\ldots,m-1, \tag{CSE-5}\\
& X_{i,m}\Leftrightarrow\neg R_{i,m-1}, && i\in N. \tag{CSE-5a}
\end{alignat}
For $m=1$, the unit clause $X_{i,1}$ fixes every task at the only station. Station-order precedence is
\begin{equation}
X_{i,k+1}\Rightarrow\neg R_{j,k}
\qquad (i,j)\in E,\ k=1,\ldots,m-1.
\tag{CSE-6}
\end{equation}

If $h_i\geq1$, exactly one start is encoded by
\begin{alignat}{2}
& T_{i,0}\Leftrightarrow S_{i,0}, && i\in N,\ h_i\geq1, \tag{CSE-7}\\
& T_{i,u-1}\Rightarrow T_{i,u}, && i\in N,\ h_i\geq1,\ u=1,\ldots,h_i-1, \tag{CSE-8}\\
& S_{i,u}\Rightarrow T_{i,u}, && i\in N,\ h_i\geq1,\ u=1,\ldots,h_i-1, \tag{CSE-9}\\
& S_{i,u}\Rightarrow\neg T_{i,u-1}, && i\in N,\ h_i\geq1,\ u=1,\ldots,h_i-1, \tag{CSE-10}\\
& T_{i,u}\wedge\neg T_{i,u-1}\Rightarrow S_{i,u}, &\qquad& i\in N,\ h_i\geq1,\ u=1,\ldots,h_i-1, \tag{CSE-11}\\
& S_{i,h_i}\Leftrightarrow\neg T_{i,h_i-1}, && i\in N,\ h_i\geq1. \tag{CSE-11a}
\end{alignat}
If $h_i=0$, the unit clause $S_{i,0}$ fixes task $i$ to start at time zero (CSE-11b).

The following boundary convention avoids separate formulas for values of $u$ outside the allowed start range:
\begin{equation}
\widetilde T_{i,u}=\begin{cases}
\bot,&u<0,\\
T_{i,u},&0\leq u<h_i,\\
\top,&u\geq h_i.
\end{cases}
\qquad i\in N,\ u\in\mathbb Z.
\label{eq:extended-t}
\end{equation}
If a task starts at $s$, it is marked active in every slot that it occupies:
\begin{equation}
S_{i,s}\Rightarrow A_{i,s+\ell}
\quad i\in N,\ s\in H_i,\ \ell=0,\ldots,t_i-1,
\tag{CSE-12}
\end{equation}
and same-station precedence is
\begin{equation}
X_{i,k}\wedge X_{j,k}\wedge S_{i,s}
\Rightarrow\neg\widetilde T_{j,s+t_i-1}
\quad (i,j)\in E,\ k\in K,\ s\in H_i.
\tag{CSE-13}
\end{equation}
If $s+t_i-1\geq h_j$, task $j$ would have to start after its latest allowed start $h_j$. The true boundary value therefore forbids this impossible assignment and start.

The activity-based non-overlap constraint in CSE \citep{VanKieu2025Compact} is
\begin{equation}
X_{i,k}\wedge X_{j,k}\wedge A_{i,\tau}\Rightarrow\neg A_{j,\tau}
\quad i<j,\ i,j\in N,\ k\in K,\ \tau\in H.
\tag{CSE-14}
\end{equation}
At each time slot, the PB power constraint is
\begin{equation}
\sum_{i\in N}w_iA_{i,\tau}\leq\Qmax
\qquad \tau\in H.
\tag{PPC}
\label{eq:sat-power}
\end{equation}
Only this forward activity implication is needed. Setting an extra $A$ variable to true can only make PPC stricter, while a schedule can satisfy the model by setting $A$ true exactly in its occupied slots.

\subsubsection*{Preprocessing and implied precedence arcs}

The CSE forward scan follows precedence order, derives earliest starts, and scans stations from first to last \citep{VanKieu2025Compact}. It rules out earlier starts and stations where the earliest start exceeds $C-t_i$. A backward scan follows reverse precedence order and rules out late starts. Sets $\mathcal F$ and $\mathcal B$ collect these excluded choices:
\begin{align}
&\neg X_{i,k} && (i,k)\in\mathcal F, \tag{CSE-15}\\
&X_{i,k}\Rightarrow\neg S_{i,s} && (i,k,s)\in\mathcal B. \tag{CSE-16}
\end{align}
Both scans use precedence-implied time order and therefore preserve every feasible schedule.

By default, CSE-6 and CSE-13 use $E$. An optional pass processes root tasks (tasks with no predecessors) in input order, skips roots already reached, and adds $(i,k)$ when it finds $i\to j\to k$. The resulting $E^+$ is a sparse subset of implied two-step relations, not the full transitive closure. Because $E$ already implies each addition, feasible schedules are unchanged.

\subsection{Alternative formulations of task non-overlap}

The three formulations differ only in how they prevent overlap. CSE uses CSE-14. The same-station (SS) and same-station disjunction (SS-D) formulations introduce $D_{i,j}$ for $i<j$; this variable is true when tasks $i$ and $j$ share a station. SS compares their activity slots, whereas SS-D requires one task to be before or after the other:
\begin{align}
& X_{i,k}\wedge X_{j,k}\Rightarrow D_{i,j}, && i<j,\ i,j\in N,\ k\in K, \tag{SS-1}\\
& X_{i,k}\wedge X_{j,\ell}\Rightarrow\neg D_{i,j}, && i<j,\ i,j\in N,\ k,\ell\in K,\ k\neq\ell. \tag{SS-2}
\end{align}
The model includes SS-2 for every pair $k\neq\ell$. SS prevents overlap with
\begin{equation}
D_{i,j}\wedge A_{i,\tau}\Rightarrow\neg A_{j,\tau}
\qquad i<j,\ i,j\in N,\ \tau\in H.
\tag{SS-3}
\end{equation}
SS-D replaces SS-3 with an either-before-or-after condition: task $j$ must finish before $i$ starts or start after $i$ finishes.
\begin{equation}
\neg D_{i,j}\vee\neg S_{i,s}\vee
\widetilde T_{j,s-t_j}\vee
\neg\widetilde T_{j,s+t_i-1}
\qquad i<j,\ i,j\in N,\ s\in H_i.
\tag{SS-D}
\end{equation}
Clauses that are automatically true are omitted, and automatically false literals are removed. In particular, $s<t_j$ gives $\widetilde T_{j,s-t_j}=\bot$.

\par\medskip
\subsection{Model correctness and size}

\begin{lemma}
For each task, CSE-1--CSE-5a select exactly one workstation and satisfy $R_{i,k}=1$ if and only if the selected station is at most $k$. Similarly, CSE-7--CSE-11b select exactly one start time and satisfy $T_{i,u}=1$ if and only if the selected start is at most $u$.
\end{lemma}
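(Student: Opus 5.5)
The two claims have the same structure: (CSE-7)--(CSE-11a) are (CSE-1)--(CSE-5a) with $(X,R,m,k)$ renamed to $(S,T,h_i+1,u)$, with positions shifted by one. I will prove the station claim for a fixed task $i$ with $m\ge2$ and obtain the start-time claim by this substitution. The degenerate cases ($m=1$, or $h_i=0$ via CSE-11b) are immediate from the unit clauses, since no $R$ or $T$ variables exist there. The statement has two directions: every satisfying assignment encodes exactly one station with the threshold meaning of $R$, and every choice of station extends to a satisfying assignment.

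\emph{Extension.} Given a station $k^\ast$, set $X_{i,k}=1$ if and only if $k=k^\ast$, and $R_{i,k}=1$ if and only if $k^\ast\le k$ for $k\le m-1$. Each clause is then checked directly.
\begin{itemize}
\item CSE-1 holds because $R_{i,1}=1$ exactly when $k^\ast=1$.
\item CSE-2 holds because $R$ is monotone in $k$.
\item CSE-3 and CSE-4 hold at $k=k^\ast$, and they are vacuous elsewhere.
\item CSE-5 has true antecedent only when $k=k^\ast$.
\item CSE-5a holds because $\neg R_{i,m-1}$ is equivalent to $k^\ast=m$.
\end{itemize}

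\emph{Soundness.} Take any satisfying assignment. By CSE-2, $R_{i,1},\dots,R_{i,m-1}$ is monotone, so it has the form $0\cdots01\cdots1$. Let $k^\ast$ be the first index with $R_{i,k^\ast}=1$, and set $k^\ast=m$ if no such index exists. Then I show that $X_{i,k}=1$ if and only if $k=k^\ast$, which makes the threshold property hold by definition of $k^\ast$. The argument splits by position.
\begin{itemize}
\item For $k=1$, CSE-1 gives $X_{i,1}=R_{i,1}$, which equals 1 exactly when $k^\ast=1$.
\item For $2\le k\le m-1$, CSE-3 and CSE-4 force $X_{i,k}\Rightarrow R_{i,k}\wedge\neg R_{i,k-1}$. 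Conversely, CSE-5 gives $R_{i,k}\wedge\neg R_{i,k-1}\Rightarrow X_{i,k}$. Hence $X_{i,k}\Leftrightarrow R_{i,k}\wedge\neg R_{i,k-1}$, and by monotonicity this holds exactly when $k=k^\ast$.
\item For $k=m$, CSE-5a gives $X_{i,m}=\neg R_{i,m-1}$, which equals 1 exactly when $k^\ast=m$.
\end{itemize}
So exactly one $X$ is true. As a side check, when $m=2$ the middle range is empty and CSE-1 with CSE-5a alone suffices.

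For starts, the same argument runs over $u=0,\dots,h_i$. CSE-7 plays the role of CSE-1, CSE-8--CSE-11 play the roles of CSE-2--CSE-5 for $1\le u\le h_i-1$, and CSE-11a plays the role of CSE-5a. This yields a unique start $s^\ast$ with $T_{i,u}=1$ if and only if $s^\ast\le u$.

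The main obstacle is bookkeeping at the boundaries. The index ranges stop at $m-1$ and $h_i-1$, so the last position has no cumulative variable and must be handled by CSE-5a or CSE-11a. I would also verify that the sequential encoding does not need clauses forbidding two true $X$'s directly. That follows because each $X_{i,k}$ is pinned to a specific "first true position" of a monotone chain, and a monotone chain has only one such position.
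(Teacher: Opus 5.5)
Your proof is correct and follows the same route as the paper. CSE-2 makes the cumulative chain monotone, and the position of its single false-to-true transition (or the all-true or all-false case) determines the selected station. CSE-1 and CSE-3--CSE-5a then pin each $X_{i,k}$ to that position, and the start-time claim follows by the same argument, with the unit clauses covering $m=1$ and $h_i=0$. Your extra extension direction is not needed for the lemma itself, but it is what the paper later uses in the converse of the correctness proposition.
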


\begin{proof}
For $m\geq2$, CSE-2 makes $R$ nondecreasing with $k$. An all-true sequence selects station 1, a false-to-true change selects one interior station, and an all-false sequence selects station $m$. CSE-1 and CSE-3--CSE-5a forbid every other $X_{i,k}$. A unit clause covers $m=1$. The same argument applies to $T$ in CSE-7--CSE-11b, including $h_i=0$.
\end{proof}

\begin{lemma}
CSE-15--CSE-16 preserve every feasible schedule.
\end{lemma}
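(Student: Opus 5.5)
The plan is to make the two scans precise as valid bounds and then show that every excluded choice violates them. Fix $C$ and an arbitrary feasible schedule $(a,\sigma)$ satisfying \eqref{eq:domain}--\eqref{eq:power-cap}. We will show that it never uses an excluded pair $(i,k)\in\mathcal F$ and never uses an excluded triple $(i,k,s)\in\mathcal B$. Then the unit clauses CSE-15 and the binary clauses CSE-16 are satisfied by the induced assignment $X_{i,a(i)}=1$, $S_{i,\sigma_i}=1$. This assignment already satisfies the rest of $\Phi(C,\Qmax)$ by the preceding lemma and the construction of the model.

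\textbf{Forward scan.} Define by induction in topological order a station-dependent earliest start $e_i(k)$. For a task $i$ placed at station $k$, only predecessors $j$ with $a(j)=k$ constrain its start time, through \eqref{eq:time-prec}. A predecessor at station $k$ is only possible if station $k$ is itself admissible for $j$. So $e_i(k)$ is the largest, over predecessors $j$ that can be at station $k$, of $e_j(k)+t_j$. One can strengthen this using chains of same-station predecessors, exactly as the scan does, by scanning stations from first to last.

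The key claim is that in every feasible schedule $\sigma_i\geq e_i(a(i))$. This follows by induction along the precedence order. If $a(j)=a(i)=k$ for $(j,i)\in E$, then $\sigma_i\geq\sigma_j+t_j\geq e_j(k)+t_j$. If $a(j)<a(i)$, the predecessor imposes no time bound at station $a(i)$, so the scan does not use it there. Moreover, \eqref{eq:station-prec} ensures that $a(j)\leq a(i)$ always holds, and this justifies the "station $\geq$ earliest station of predecessors" pruning.

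Consequently, if $(i,k)\in\mathcal F$ because $e_i(k)>C-t_i$, then $a(i)=k$ would force $\sigma_i>C-t_i$, contradicting \eqref{eq:domain}. Exclusions of stations below the earliest admissible station of some predecessor are ruled out by \eqref{eq:station-prec}.

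\textbf{Backward scan.} Symmetrically, define a latest start $\ell_i(k)$ in reverse precedence order. It is initialized at $C-t_i$, and for same-station successors $j$ it satisfies $\ell_i(k)\leq \ell_j(k)-t_i$. The same induction, now using \eqref{eq:time-prec} in the form $\sigma_i+t_i\leq\sigma_j\leq\ell_j(k)$, shows that $\sigma_i\leq \ell_i(a(i))$. Hence any triple $(i,k,s)\in\mathcal B$ with $s>\ell_i(k)$ cannot satisfy $a(i)=k$ and $\sigma_i=s$ at the same time.

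\textbf{Main obstacle.} The delicate point is the station-dependence of the bounds. The argument must only use predecessors or successors that actually share station $k$, since \eqref{eq:time-prec} applies only then. So the bounds for station $k$ must be valid lower/upper bounds conditioned on that event. For a predecessor that could lie at an earlier station, the safe bound contribution is $0$, or more generally the minimum over its admissible stations.

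I would handle this by stating the invariant explicitly: for every feasible schedule and every $i$, $e_i(a(i))\leq\sigma_i\leq\ell_i(a(i))$. I would then prove it by a single induction over the topological order (reverse order for $\ell$), checking each update rule of the scans against \eqref{eq:station-prec}--\eqref{eq:time-prec}. Since power and non-overlap are not used, the bounds are relaxations, and preservation follows.
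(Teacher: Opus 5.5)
Your strategy matches the paper's brief proof: show by induction in precedence order, and in reverse order for the backward scan, that the scan values are valid earliest and latest starts in every feasible schedule. Then each choice in $\mathcal F$ or $\mathcal B$ contradicts \eqref{eq:domain} or \eqref{eq:station-prec}. The gap is in the concrete recurrence you commit to. Taking $e_i(k)$ as the maximum of $e_j(k)+t_j$ over all predecessors $j$ that \emph{can} be at station $k$ does not give a valid lower bound. Your induction step hides this. You dismiss the case $a(j)<a(i)$ with ``the scan does not use it there'', but your $e_i(k)$ does not depend on the schedule and uses such a $j$ whenever station $k$ is admissible for it.

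A counterexample: let $E=\{(1,2)\}$, $t_1=t_2=1$, $m=2$, $C=1$ and $\Qmax\geq w_1+w_2$. Station $2$ is admissible for task $1$, so your rule gives $e_2(1)=e_2(2)=1>C-t_2=0$ and excludes both stations of task $2$. Yet $a=(1,2)$, $\sigma=(0,0)$ is feasible. Your backward rule fails the same way. With $C=2$ and the same graph, task $2$ can be at station $1$. Then $\ell_1(1)\leq\ell_2(1)-t_1=0$ excludes start $1$ for task $1$ at station $1$, although $a=(1,2)$, $\sigma=(1,0)$ is feasible.

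Your ``main obstacle'' paragraph names the right repair, but it must replace the rule rather than sit beside it. A predecessor $j$ may contribute $e_j(k)+t_j$ to $e_i(k)$ only if every station below $k$ has already been excluded for $j$ (vacuously so when $k=1$). Then \eqref{eq:station-prec} and those exclusions force $a(j)=k$ whenever $a(i)=k$, so \eqref{eq:time-prec} applies; otherwise $j$ contributes nothing. Symmetrically, a successor may lower $\ell_i(k)$ only if every station above $k$ is excluded for it, which always holds when $k=m$. This test relies on station exclusions already derived for earlier tasks. Your induction must therefore prove one joint invariant: every excluded station is infeasible, and $e_i(a(i))\leq\sigma_i\leq\ell_i(a(i))$.

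A stronger forward rule that is also valid uses the set $P_i^*$ of all transitive predecessors of $i$. Suppose $a(i)=k$. Each $h\in P_i^*$ with $a(h)=k$ lies on a path to $i$ whose tasks are all at station $k$, so it finishes by $\sigma_i$ without overlapping the others. The rest of $P_i^*$ has total duration at most $(k-1)C$ on stations $1,\ldots,k-1$. Hence $\sigma_i\geq\max\{0,\sum_{h\in P_i^*}t_h-(k-1)C\}$. With either corrected rule, the rest of your argument goes through.
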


\begin{proof}
Following the tasks in precedence order shows that every value from the forward scan is a valid earliest start. Any earlier start is impossible, and a value above $C-t_i$ makes the station impossible. The same reasoning in reverse precedence order gives the latest-start bounds from the backward scan. Thus, every choice in $\mathcal F$ or $\mathcal B$ is infeasible for cycle time at most $C$.
\end{proof}

\begin{proposition}
For fixed $C$ and $\Qmax$, the formula consisting of CSE-1--CSE-13, one of the three non-overlap blocks, CSE-15--CSE-16, and PPC is satisfiable if and only if a feasible schedule with cycle time at most $C$ exists.
\end{proposition}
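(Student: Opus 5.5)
We prove the two directions separately, using Lemma~1 to pass between Boolean assignments and integer data $(a,\sigma)$, and Lemma~2 to dispose of the preprocessing clauses.

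\emph{Soundness (satisfiable $\Rightarrow$ schedule).} Take a satisfying assignment. By Lemma~1, each task has exactly one true $X_{i,k}$ and exactly one true $S_{i,s}$ with $s\in H_i$. Set $a(i)=k$ and $\sigma_i=s$. Then \eqref{eq:domain} holds because $s\le h_i=C-t_i$.

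For \eqref{eq:station-prec}, suppose $a(i)=k+1>a(j)$ for some $(i,j)\in E$. Then $R_{j,k}$ is true by Lemma~1, which contradicts CSE-6.

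For \eqref{eq:time-prec}, take $a(i)=a(j)$ and $\sigma_i=s$. CSE-13 forces $\widetilde T_{j,s+t_i-1}$ to be false. By Lemma~1 and the boundary convention \eqref{eq:extended-t}, this says exactly that $\sigma_j>s+t_i-1$, i.e.\ $\sigma_j\ge s+t_i$. If $s+t_i-1\ge h_j$, the boundary value is $\top$, so this branch cannot be satisfied at all.

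For \eqref{eq:nonoverlap}, treat the three blocks separately.
\begin{itemize}
\item Under CSE-14, CSE-12 makes $A_{i,\tau}$ true on the occupied slots of $i$, so two tasks on a common station cannot share a slot.
\item Under SS, SS-1 forces $D_{i,j}$ true whenever $i$ and $j$ share a station, and SS-3 then gives the same conclusion.
\item Under SS-D, $D_{i,j}$ is true as before, and the clause with $s=\sigma_i$ yields $\sigma_j\le s-t_j$ or $\sigma_j\ge s+t_i$, which is the disjunction in \eqref{eq:nonoverlap}.
\end{itemize}
Finally, CSE-12 gives $A_{i,\tau}\ge I_{i,\tau}$. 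Since all $w_i>0$, PPC implies \eqref{eq:power-cap}. The schedule also fits in $[0,C)$, so it has cycle time at most $C$.

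\emph{Completeness (schedule $\Rightarrow$ satisfiable).} Take a feasible schedule with cycle time $C'\le C$. Because \eqref{eq:domain}--\eqref{eq:power-cap} only become weaker as the horizon grows, it is also feasible for horizon $C$. Define the assignment as follows:
\begin{itemize}
\item $X_{i,k}=[a(i)=k]$ and $S_{i,s}=[\sigma_i=s]$;
\item $R_{i,k}=[a(i)\le k]$ and $T_{i,u}=[\sigma_i\le u]$;
\item $A_{i,\tau}=I_{i,\tau}$;
\item $D_{i,j}=[a(i)=a(j)]$.
\end{itemize}
By Lemma~1, CSE-1--CSE-11b hold. CSE-12 holds by construction. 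CSE-6 and CSE-13 translate back to \eqref{eq:station-prec} and \eqref{eq:time-prec}; for CSE-13, note that $\sigma_j\ge\sigma_i+t_i$ together with $\sigma_j\le h_j$ forces $\sigma_i+t_i-1<h_j$, so the $\top$ boundary case never arises. SS-1 and SS-2 hold by the definition of $D$. CSE-14, SS-3 and SS-D follow from \eqref{eq:nonoverlap}. PPC is exactly \eqref{eq:power-cap}, since $A=I$.

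The one delicate point is CSE-15--CSE-16. By Lemma~2, every excluded pair or triple is infeasible for every schedule of cycle time at most $C$, so our schedule never takes an excluded value. If $E^+$ is used in place of $E$, the added arcs are implied by $E$. Hence \eqref{eq:station-prec} and \eqref{eq:time-prec} hold for them by transitivity: chaining $\sigma_i+t_i\le\sigma_j$ and $\sigma_j+t_j\le\sigma_k$ on a common station also works through an intermediate task at the same station, because \eqref{eq:station-prec} sandwiches its station between $a(i)$ and $a(k)$.

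I expect the main obstacle to be the boundary convention \eqref{eq:extended-t} in CSE-13 and SS-D. Each case $u<0$, $0\le u<h_j$ and $u\ge h_j$ has to be matched with the intended inequality on $\sigma_j$. This is routine once we read $\widetilde T_{j,u}$ as ``$\sigma_j\le u$'', which Lemma~1 justifies for in-range $u$ and which holds trivially at the boundaries.
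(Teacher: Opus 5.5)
Your proposal is correct and follows the same two-direction argument as the paper. For soundness you decode $a$ and $\sigma$ via Lemma~1 and check each block; for completeness you set $X,S,R,T$ canonically, $A$ on occupied slots and $D_{i,j}$ by station equality, and use Lemma~2 for CSE-15--CSE-16, with more explicit handling of the boundary convention and of $E^+$ than the paper gives. One small imprecision is that enlarging the horizon from $C'$ to $C$ adds power constraints for slots $\tau\ge C'$ rather than weakening them, but these hold trivially because no task is active there.
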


\begin{proof}
If the formula is satisfiable, the first lemma gives one station and start time per task. CSE-6 enforces station order, and CSE-13 enforces time order for related tasks at the same station. Each non-overlap formulation prevents simultaneous work at one station. For SS and SS-D, SS-1--SS-2 make $D_{i,j}$ equivalent to sharing a station, and SS-D imposes $\sigma_j+t_j\leq\sigma_i$ or $\sigma_j\geq\sigma_i+t_i$. CSE-12 marks occupied slots, so PPC limits actual power. Preprocessing is safe by the second lemma; hence the schedule is feasible.

Conversely, take a feasible schedule. Set $X$ and $S$ from its assignments and starts, $R$ and $T$ by their definitions, $A$ on occupied slots, and $D_{i,j}$ by station equality. Feasibility and the two lemmas satisfy the formula.
\end{proof}

Table~\ref{tab:complexity} gives the worst-case model sizes, where $\bar t$ is the average task duration. ``New variables'' counts only variables introduced by each block. Because the size depends polynomially on the numerical value of $C$, these bounds are pseudo-polynomial rather than polynomial in the number of bits used to write $C$.

\begin{table}[H]
\centering
\caption{Size of the main SAT encoding blocks.}
\label{tab:complexity}
\footnotesize
\renewcommand{\arraystretch}{1.15}
\setlength{\tabcolsep}{9pt}
\begin{tabular}{@{} l c c @{}}
\toprule
Block & New variables & Clauses \\
\midrule
Assignment prefix & $O(nm)$ & $O(nm)$ \\
Start-time prefix & $O(nC)$ & $O(nC)$ \\
Station precedence & $0$ & $O(|E|m)$ \\
Same-station precedence & $0$ & $O(|E|mC)$ \\
Activity & $O(nC)$ & $O(nC\bar t)$ \\
CSE non-overlap & $0$ & $O(n^2mC)$ \\
SS or SS-D & $O(n^2)$ & $O(n^2m^2+n^2C)$ \\
Station and start pruning & $0$ & $O(nmC)$ \\
\bottomrule
\end{tabular}
\end{table}

Binary Merger converts the $C$ weighted power inequalities into conjunctive normal form (CNF), the SAT solver's clause format. Its data-dependent extra size is not shown. Pairwise exactly-one encodings would require $O(nm^2)$ station and $O(nC^2)$ start clauses; pairwise precedence would require $O(|E|m^2)$ and $O(|E|mC^2)$. Cumulative variables make these parts linear in $m$ or $C$. CSE non-overlap uses $O(n^2mC)$ clauses, while SS-2 uses $O(n^2m^2)$.

\subsection{Search strategies for cycle-time minimization}

The search keeps a feasible upper bound $UB$, a valid lower bound $LB$, and a SAT model for a trial cycle time. The two strategies differ only in how they find their first feasible SAT solution; both then use the same procedure to find shorter cycles.

\subsubsection*{Cycle-time bounds and initial cycle time}
Three simple bounds give the initial cycle-time lower bound
\begin{equation}
\LBzero=\max\left\{
\max_{i\in N} t_i,
\left\lceil\frac{\sum_{i\in N} t_i}{m}\right\rceil,
\left\lceil\frac{\sum_{i\in N} w_it_i}{\Qmax}\right\rceil
\right\}.
\label{eq:cycle-lb}
\end{equation}
The three terms come from the longest task, the average workload per station, and the total task energy allowed by the power limit; summing \eqref{eq:power-cap} gives $\sum_i w_it_i\leq\Qmax C$. If $\Qmax\geq\max_i w_i$, running the tasks one at a time in precedence order at a single station gives the feasible bound
\begin{equation}
C^{\mathrm{seq}}=\sum_{i\in N} t_i.
\label{eq:sequential-ub}
\end{equation}
The search starts from the cycle time
\begin{equation}
\Cinit=\max\left\{\LBzero,
\left\lceil\frac{2\sum_{i\in N} t_i}{m}\right\rceil\right\}.
\label{eq:initial-cycle}
\end{equation}
The factor two raises the initial value above the average workload, while the maximum prevents a trial below $\LBzero$. The guaranteed feasible value is $C^{\mathrm{seq}}$; $\Cinit$ need not be feasible.

\subsubsection*{Initialization strategies}
Direct search makes a full solver call at $\Cinit$ and, if it is infeasible, rebuilds at $C^{\mathrm{seq}}$. Two-phase search first uses calls limited by a fixed number of SAT conflicts while multiplying the tested cycle time by 1.5. Algorithms~\ref{alg:controller} and~\ref{alg:phase-one} show these starts; both then use Algorithm~\ref{alg:incremental-search}.

Let $\phi^*$ denote the best solution found so far, $UB$ its cycle time, $LB$ the current lower bound, $C_{\mathrm{try}}$ the tested value, and $\mathcal S$ the current SAT solver. A satisfiable call returns $\phi$, with $C(\phi)=\max\{s+t_i:S_{i,s}=1\}$.

\begin{algorithm}[H]
\caption{Overall procedure for Direct and Two-phase SAT search}
\label{alg:controller}
\small
\begin{algorithmic}[1]
\Require Instance $(N,E,t,w,m,\Qmax)$; strategy $M\in\{\Status{DIRECT},\Status{TWO\mbox{-}PHASE}\}$; Phase-I conflict limit $B_F$ when $M=\Status{TWO\mbox{-}PHASE}$
\Ensure Best solution $\phi^*$; bounds $[LB,UB]$; proof status
\If{$\Qmax<\max_{i\in N} w_i$} \State \Return $(\varnothing,\LBzero,\infty,\Status{INFEASIBLE})$ \EndIf
\State $LB\gets\LBzero$; $(\phi^*,UB)\gets\textsc{SequentialSchedule}()$; $C_{\mathrm{seq}}\gets UB$
\If{$M=\Status{TWO\mbox{-}PHASE}$}
  \State $(\phi^*,LB,UB,\mathcal S,r)\gets\textsc{PhaseOne}(\phi^*,LB,UB,B_F)$
\Else
  \State $C_{\mathrm{try}}\gets\min\{\Cinit,C_{\mathrm{seq}}\}$
  \State $\mathcal S\gets\textsc{BuildSolver}(\Phi(C_{\mathrm{try}},\Qmax))$; $(r,\phi)\gets\textsc{Solve}(\mathcal S)$
  \If{$r=\Status{UNSAT}$}
    \State $LB\gets\max\{LB,C_{\mathrm{try}}+1\}$
    \State $\mathcal S\gets\textsc{BuildSolver}(\Phi(C_{\mathrm{seq}},\Qmax))$; $(r,\phi)\gets\textsc{Solve}(\mathcal S)$
  \EndIf
  \If{$r=\Status{SAT}$} \State $(\phi^*,UB)\gets(\phi,C(\phi))$ \EndIf
\EndIf
\If{$r=\Status{TIMEOUT}$} \State \Return $(\phi^*,LB,UB,\Status{TIMEOUT})$ \EndIf
\If{$UB=LB$} \State \Return $(\phi^*,UB,UB,\Status{OPTIMAL})$ \EndIf
\State \Return $\textsc{IncrementalOptimize}(\mathcal S,\phi^*,LB,UB)$
\end{algorithmic}
\end{algorithm}

\begin{algorithm}[H]
\caption{Phase I of Two-phase search: feasibility checks with a conflict limit}
\label{alg:phase-one}
\small
\begin{algorithmic}[1]
\Require Best solution $\phi^*$; bounds $[LB,UB]$; conflict limit $B_F$
\Ensure Best solution and bounds; SAT solver $\mathcal S$; status $r$
\State $C_{\mathrm{seq}}\gets UB$; $C_{\mathrm{try}}\gets\min\{\Cinit,C_{\mathrm{seq}}\}$
\While{$C_{\mathrm{try}}<C_{\mathrm{seq}}$}
  \State $\mathcal S\gets\textsc{BuildSolver}(\Phi(C_{\mathrm{try}},\Qmax))$
  \State $(r,\phi)\gets\textsc{SolveLimited}(\mathcal S,B_F)$
  \If{$r=\Status{SAT}$}
    \State \Return $(\phi,LB,C(\phi),\mathcal S,r)$
  \ElsIf{$r=\Status{UNSAT}$}
    \State $LB\gets\max\{LB,C_{\mathrm{try}}+1\}$
  \ElsIf{$r=\Status{TIMEOUT}$}
    \State \Return $(\phi^*,LB,UB,\mathcal S,r)$
  \ElsIf{$r=\Status{UNKNOWN}$} \Comment{keep $LB$ unchanged}
  \EndIf
  \State $C_{\mathrm{try}}\gets\min\{C_{\mathrm{seq}},\max(C_{\mathrm{try}}+1,\lceil1.5C_{\mathrm{try}}\rceil)\}$
\EndWhile
\State $\mathcal S\gets\textsc{BuildSolver}(\Phi(C_{\mathrm{seq}},\Qmax))$
\State $(r,\phi)\gets\textsc{Solve}(\mathcal S)$
\If{$r=\Status{SAT}$} \State \Return $(\phi,LB,C(\phi),\mathcal S,r)$ \EndIf
\State \Return $(\phi^*,LB,UB,\mathcal S,r)$
\end{algorithmic}
\end{algorithm}

\subsubsection*{Incremental improvement and optimality proof}

\begin{samepage}
Any schedule shorter than the current best value $UB$ must satisfy $\sigma_i+t_i\leq UB-1$ for every task. The model adds one Boolean requirement for each task:
\begin{equation}
\Gamma(UB)=\{T_{i,UB-t_i-1}:i\in N\}.
\label{eq:cut}
\end{equation}
\end{samepage}
\begin{algorithm}[H]
\caption{Shared incremental improvement and optimality proof}
\label{alg:incremental-search}
\small
\begin{algorithmic}[1]
\Require SAT solver $\mathcal S$ with a feasible solution; best solution $\phi^*$; bounds $[LB,UB]$
\Ensure Best solution and bounds; proof status
\While{$LB<UB$}
  \State add every literal in $\Gamma(UB)$ to $\mathcal S$
  \State $(r,\phi)\gets\textsc{SolveIncrementally}(\mathcal S)$
  \If{$r=\Status{UNSAT}$} \State \Return $(\phi^*,UB,UB,\Status{OPTIMAL})$ \EndIf
  \If{$r=\Status{TIMEOUT}$} \State \Return $(\phi^*,LB,UB,\Status{TIMEOUT})$ \EndIf
  \State $(\phi^*,UB)\gets(\phi,C(\phi))$
  \If{$UB=LB$} \State \Return $(\phi^*,UB,UB,\Status{OPTIMAL})$ \EndIf
\EndWhile
\State \Return $(\phi^*,UB,UB,\Status{OPTIMAL})$
\end{algorithmic}
\end{algorithm}

A regular solver call returns \Status{SAT} when it finds a feasible schedule, \Status{UNSAT} when it proves that none exists, or \Status{TIMEOUT} when the time limit is reached first. The shorter Phase-I call, \textsc{SolveLimited}, may return \Status{UNKNOWN} after reaching its conflict limit; this result proves neither feasibility nor infeasibility. If the lower and upper bounds become equal, the current solution is optimal.

Because $LB\geq\max_i t_i$ and the improvement loop requires $LB<UB$, each index $u=UB-t_i-1$ in \eqref{eq:cut} is nonnegative. If the model covers times up to $C_h\geq UB$, then $u\leq C_h-t_i-1$, so every required variable exists.

In Phase I, \Status{UNKNOWN} leaves $LB$ unchanged, whereas \Status{UNSAT} proves the tested value infeasible and raises $LB$ to $C_{\mathrm{try}}+1$. Multiplying the tested value by 1.5 reaches $C^{\mathrm{seq}}$ after relatively few model rebuilds. A \Status{SAT} result gives a better solution and a solver that can be reused; otherwise, the one-at-a-time schedule and a solver at $C^{\mathrm{seq}}$ remain available.

The $-1$ in \eqref{eq:cut} is needed because a cycle shorter than $UB$ can end no later than $UB-1$: hence $\sigma_i\leq UB-t_i-1$. Each \Status{SAT} result decreases $UB$. Equality with $LB$, or \Status{UNSAT} after asking for a shorter cycle, proves optimality. A timeout returns the best feasible solution and the valid bounds.

\begin{proposition}
Algorithms~\ref{alg:controller}--\ref{alg:incremental-search} return a feasible solution whenever $\Qmax\geq\max_{i\in N} w_i$. If they return \Status{OPTIMAL}, the reported cycle time is globally optimal.
\end{proposition}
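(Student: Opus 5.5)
The proof splits into a feasibility part and an optimality part. Both rest on the invariants that $\phi^*$ is always a feasible schedule with cycle time $UB$, and that every cycle time below $LB$ is infeasible. I would first check these invariants at initialization and then show that every step of Algorithms~\ref{alg:controller}--\ref{alg:incremental-search} preserves them.

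\emph{Feasibility.} Assume $\Qmax\geq\max_i w_i$. Then Algorithm~\ref{alg:controller} does not return \Status{INFEASIBLE}, and \textsc{SequentialSchedule} builds the one-at-a-time schedule at a single station.

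This schedule is feasible for \eqref{eq:domain}--\eqref{eq:power-cap} with $C=C^{\mathrm{seq}}$:
\begin{itemize}
\item tasks are placed consecutively in a topological order of the DAG, so \eqref{eq:station-prec}--\eqref{eq:nonoverlap} hold;
\item in each slot exactly one task is active, so the power drawn is $w_i\leq\Qmax$ and \eqref{eq:power-cap} holds.
\end{itemize}
Hence $\phi^*$ is set to a feasible solution before any solver call.

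Afterwards $\phi^*$ is replaced only after a \Status{SAT} answer. By the proposition on model correctness, such an assignment decodes to a feasible schedule with cycle time at most the tested $C$. The added unit literals $\Gamma(UB)$ only restrict the formula, so decoded schedules remain feasible. The returned $\phi^*$ is therefore always feasible, including on \Status{TIMEOUT}.

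\emph{Validity of $LB$.} Initially $LB=\LBzero$. Each of its three terms is a lower bound: the first from \eqref{eq:domain}, the second by summing workloads over the $m$ stations, and the third by summing \eqref{eq:power-cap} over $H$.

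$LB$ is raised only to $C_{\mathrm{try}}+1$, and only after an \Status{UNSAT} answer for $\Phi(C_{\mathrm{try}},\Qmax)$. By the correctness proposition, this means no feasible schedule with cycle time at most $C_{\mathrm{try}}$ exists, so the new $LB$ is still valid. \Status{UNKNOWN} answers leave $LB$ unchanged, so they cannot break the invariant.

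\emph{Optimality.} \Status{OPTIMAL} is returned in three ways.
\begin{itemize}
\item When $UB=LB$: a feasible value equals a valid lower bound, so it is optimal.
\item When the loop of Algorithm~\ref{alg:incremental-search} exits because $LB\geq UB$: since $LB\leq$ the optimum $\leq UB$, equality holds and this reduces to the first case.
\item When an incremental call returns \Status{UNSAT} after the literals $\Gamma(UB)$ have been added. This is the main obstacle and needs a separate argument.
\end{itemize}

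For the third case, the solver holds $\Phi(C_h,\Qmax)$ for the cycle time $C_h\geq UB$ at which it was built, together with all accumulated unit literals $\Gamma(UB')$ for $UB'\geq UB$. Since $\Gamma(UB)$ is the strongest of these, the formula is equivalent to $\Phi(C_h,\Qmax)\wedge\Gamma(UB)$.

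Now suppose a feasible schedule with cycle time $C'<UB$ existed. We need to encode it as a satisfying assignment of this formula. Its starts satisfy $\sigma_i\leq C'-t_i\leq UB-t_i-1$, and these indices exist in the model by the remark following \eqref{eq:cut}. Viewed as a schedule with cycle time $C_h$, it satisfies $\Phi(C_h,\Qmax)$ by the converse direction of the correctness proposition. The prefix semantics of $T$ from the first lemma then make each literal $T_{i,UB-t_i-1}$ true.

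One subtlety is the pruning clauses CSE-15--CSE-16, which are built for $C_h$. By the second lemma they exclude only choices infeasible for cycle time at most $C_h$, so the schedule survives them. The resulting satisfying assignment contradicts \Status{UNSAT}. Hence no feasible cycle time below $UB$ exists, and $UB$ is globally optimal.
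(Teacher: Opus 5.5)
Your proof is correct and follows essentially the same route as the paper's. The one-at-a-time schedule is feasible, \Status{UNKNOWN} never moves a bound, $LB$ rises only after \Status{UNSAT}, and $\Gamma(UB)$ captures exactly the schedules shorter than $UB$, so equal bounds or an \Status{UNSAT} answer certify global optimality; your version also spells out points the paper leaves implicit, namely that $\phi^*$ starts as the sequential schedule and is replaced only by decoded \Status{SAT} solutions (so a feasible schedule is returned even on \Status{TIMEOUT}), and that the accumulated cuts and the pruning clauses built at $C_h\geq UB$ cannot exclude a shorter feasible schedule.
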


\begin{proof}
The schedule that runs tasks one at a time in precedence order has no overlap and never uses more than the power of one task, which does not exceed $\Qmax$. Thus, the model at $C^{\mathrm{seq}}$ is satisfiable. \Status{UNKNOWN} does not change a bound, and $LB$ increases only after \Status{UNSAT}. Because \eqref{eq:cut} describes every schedule shorter than the current best one, \Status{SAT} decreases $UB$, whereas \Status{UNSAT} excludes all smaller values. Equal bounds, or proof that every smaller value is infeasible, therefore proves global optimality.
\end{proof}

\subsection{Time-indexed MIP and CP formulations}

Each commercial model uses a finite time range, called the horizon and denoted by $\bar C$. If no schedule exists in this range, it is increased by a factor of 1.5; $C^{\mathrm{seq}}=\sum_i t_i$ is a guaranteed feasible final value. Let $\bar H=\{0,\ldots,\bar C-1\}$ and $\bar H_i=\{0,\ldots,\bar C-t_i\}$. Binary variables $x_{i,k}$ and $y_{i,s}$ choose a station and a start time. The activity value is
\begin{equation}
b_{i,\tau}=\sum_{s=\max\{0,\tau-t_i+1\}}^{\min\{\tau,\bar C-t_i\}}y_{i,s}
\qquad i\in N,\ \tau\in\bar H.
\end{equation}
Exactly one station and one start are selected:
\begin{align}
& \sum_{k\in K}x_{i,k}=1, \quad \sum_{s\in\bar H_i}y_{i,s}=1 && i\in N. \label{eq:mip-assign}
\end{align}
For every $(i,j)\in E$ and $k\in K$, station order and same-station temporal order are imposed by
\begin{align}
& x_{j,k}\leq\sum_{h=1}^{k}x_{i,h} && (i,j)\in E,\ k\in K, \label{eq:mip-station-order}\\
& y_{j,s}\leq\sum_{r=0}^{s-t_i}y_{i,r}+2-x_{i,k}-x_{j,k} && (i,j)\in E,\ k\in K,\ s\in\bar H_j, \label{eq:mip-time-order}
\end{align}
where an empty sum is zero. Pairwise station capacity and the peak power constraint are
\begin{align}
& x_{i,k}+x_{j,k}+b_{i,\tau}+b_{j,\tau}\leq 3 && i<j,\ i,j\in N,\ k\in K,\ \tau\in\bar H, \label{eq:mip-station-cap}\\
& \sum_{i\in N}w_i b_{i,\tau}\leq\Qmax && \tau\in\bar H. \label{eq:mip-power-cap}
\end{align}
The integer objective variable $Z$ satisfies
\begin{align}
& Z\geq\sum_{s\in\bar H_i}s y_{i,s}+t_i && i\in N, \label{eq:mip-completion}\\
& Z\geq\sum_{i\in N}t_i x_{i,k} && k\in K. \label{eq:mip-workload}
\end{align}
The model minimizes $Z$, and \eqref{eq:mip-workload} adds a valid lower bound. Every method uses $\LBzero$: SAT initializes $LB$ with it, Gurobi bounds $Z$, and CPLEX adds $Z\geq\LBzero$. The three terms follow from \eqref{eq:mip-completion}, \eqref{eq:mip-workload}, and the sum of \eqref{eq:mip-power-cap}. Gurobi and CPLEX solve the MIP; CPLEX-CP uses the same discrete-time constraints in CP Optimizer.

\section{Experimental design}
\label{sec:experiment}

\subsection{Benchmark set and standard SALBP-2 optima}

The 72 cases combine 13 classical SALBP graphs with the workstation counts in Table~\ref{tab:benchmark-inventory}. Task durations and precedence relations come from the standard files \citep{SchollBenchmark1993}. Following earlier studies \citep{Zheng2024Optimizing,VanKieu2025Compact}, each graph uses one fixed power vector. Each task power is drawn independently and uniformly from the integers 5 to 50. The vectors are provided with the inputs and reused in all tests. The main analysis weights cases equally; a second analysis weights the 13 graphs equally.

\begin{table}[H]
\centering
\caption{Benchmark precedence graphs and exact workstation counts used.}
\label{tab:benchmark-inventory}
\footnotesize
\renewcommand{\arraystretch}{1.15}
\setlength{\tabcolsep}{4pt}
\begin{tabularx}{\textwidth}{@{} l c c Y @{\quad} l c c Y @{}}
\toprule
Graph & $n$ & Cases & Tested $m$ & Graph & $n$ & Cases & Tested $m$ \\
\midrule
BOWMAN & 8 & 1 & 5 & BUXEY & 29 & 6 & 7, 8, 10--13 \\
GUNTHER & 35 & 6 & 7--9, 11, 12, 14 & HESKIA & 28 & 4 & 3--5, 8 \\
JACKSON & 11 & 5 & 3--6, 8 & JAESCHKE & 9 & 4 & 3, 4, 6, 8 \\
LUTZ2 & 89 & 11 & 24--26, 28, 29, 31, 34, 37, 40, 44, 49 & MANSOOR & 11 & 3 & 2--4 \\
MERTENS & 7 & 4 & 2, 3, 5, 6 & MITCHELL & 21 & 3 & 3, 5, 8 \\
ROSZIEG & 25 & 4 & 4, 6, 8, 10 & SAWYER & 30 & 8 & 5, 7, 8, 10--14 \\
WARNECKE & 58 & 13 & 14, 15, 17, 19--25, 27, 29, 31 &  &  &  &  \\
\addlinespace[3pt]
\multicolumn{8}{@{}c}{\textit{Total: 13 precedence graphs and 72 benchmark cases}} \\
\bottomrule
\end{tabularx}
\end{table}

The standard SALBP-2 optimum without a power limit is the baseline. The Scholl SALBP-1 values are feasible upper bounds but do not alone prove SALBP-2 optimality \citep{SchollBenchmark1993}. The SAT run without a power limit is called No-Peak SAT.

\begin{table}[H]
\centering
\caption{Sources of the standard SALBP-2 optima.}
\label{tab:no-peak-reference}
\footnotesize
\renewcommand{\arraystretch}{1.15}
\setlength{\tabcolsep}{6pt}
\begin{tabularx}{\textwidth}{@{} >{\raggedright\arraybackslash}p{5.5cm} c c >{\raggedright\arraybackslash}X @{}}
\toprule
Reference source & Cases & \shortstack{Proved by\\No-Peak SAT} & Basis for optimality \\
\midrule
Scholl SALBP-2 benchmark value & 35 & 31 / 35 & Published exact SALBP-2 optimum \\
Scholl SALBP-1 feasible cycle time + No-Peak SAT & 37 & 37 / 37 & Optimality proved by SAT \\
\midrule
\textit{Total} & 72 & 68 / 72 & All reference values are proven optima \\
\bottomrule
\end{tabularx}
\end{table}

The remaining four cases use their published optima.

\FloatBarrier
\subsection{Peak power limits and SAT configurations}

To set reproducible test limits, order the task powers as $w_1^{\downarrow}\geq\cdots\geq w_n^{\downarrow}$. Because at most one task per workstation can be active, total power lies within these bounds:
\begin{equation}
UB_Q=\sum_{\ell=1}^{m}w_\ell^{\downarrow},
\qquad LB_Q=\max_{i\in N}w_i.
\label{eq:q-bounds}
\end{equation}
Here $LB_Q$ is the largest single-task power and hence a necessary limit. The sum $UB_Q$ is a valid upper bound, although it may not be reached. The test limits are
\begin{align}
Q_{\max}^{\mathrm{top}}&=\left\lfloor\frac{UB_Q+LB_Q}{2}\right\rfloor, \label{eq:q-top}\\
Q_{\max}^{\mathrm{avg}}&=\left\lfloor\frac{m\bar w+LB_Q}{2}\right\rfloor, \label{eq:q-avg}
\end{align}
where $\bar w$ is mean task power. The top-$m$ formula uses the valid bound $UB_Q$; $m\bar w$ is only a reference value and need not bound schedule power. In all 72 cases, the average-based limit is smaller, both limits exceed $LB_Q$, and $Q_{\max}^{\mathrm{avg}}-LB_Q\geq4$.

For comparison across cases, the position of a limit within $[LB_Q,UB_Q]$ is reported as
\begin{equation}
\alpha_Q=\frac{\Qmax-LB_Q}{UB_Q-LB_Q},
\qquad
s_Q=1-\alpha_Q,
\label{eq:limit-tightness}
\end{equation}
Thus, $\alpha_Q$ shows where the limit lies between $LB_Q$ and $UB_Q$, while $s_Q$ is larger for a tighter limit. Neither value is the actual peak power of a schedule.

The experiments test two limits, two searches (Direct and Two-phase), and three non-overlap formulations (CSE, SS, and SS-D). Both searches use $E$; Direct also tests $E^+$. CSE-14 is the earlier formulation \citep{VanKieu2025Compact}. In Table~\ref{tab:sat-configurations}, D and 2P identify the search, and the suffix identifies $E$ or $E^+$.

\begin{table}[H]
\centering
\caption{SAT configurations reported in the main comparisons.}
\label{tab:sat-configurations}
\footnotesize
\renewcommand{\arraystretch}{1.10}
\setlength{\tabcolsep}{14pt}
\begin{tabular}{@{} l c c l @{}}
\toprule
ID & Search & Arcs & Non-overlap block \\
\midrule
D-CSE/$E$ & Direct & $E$ & CSE (CSE-14) \\
D-SS/$E$ & Direct & $E$ & SS (SS-1--SS-3) \\
D-SS-D/$E$ & Direct & $E$ & SS-D (SS-1--SS-2, SS-D) \\
\addlinespace[2pt]
2P-CSE/$E$ & Two-phase & $E$ & CSE (CSE-14) \\
2P-SS/$E$ & Two-phase & $E$ & SS (SS-1--SS-3) \\
2P-SS-D/$E$ & Two-phase & $E$ & SS-D (SS-1--SS-2, SS-D) \\
\addlinespace[2pt]
D-CSE/$E^+$ & Direct & $E^+$ & CSE (CSE-14) \\
D-SS/$E^+$ & Direct & $E^+$ & SS (SS-1--SS-3) \\
D-SS-D/$E^+$ & Direct & $E^+$ & SS-D (SS-1--SS-2, SS-D) \\
\bottomrule
\end{tabular}
\end{table}

Each method ID specifies its search, non-overlap formulation, and precedence set.

The main experiment contains 864 runs: two searches, three formulations on $E$, two limits, and 72 cases. Another 432 Direct runs test $E^+$. Both Direct and Two-phase runs start at $\Cinit$, which is feasible for every case.

\FloatBarrier
\subsection{Computational environment and solver settings}

Experiments used Google Cloud c4-highcpu-8 instances (8 virtual CPUs, 16 GB RAM, Ubuntu 20.04 LTS) and a 3,600-second time limit including model construction and solver calls. CaDiCaL 1.9.5 through PySAT 1.8.dev20 used Binary Merger for PB constraints \citep{Ignatiev2018PySAT,Biere2024CaDiCaL}; commercial versions were Gurobi-MIP 12.0.3, CPLEX-MIP 22.1.1, and CPLEX-CP 22.1.1.

CaDiCaL used one thread; the commercial solvers used default multithreading on eight virtual CPUs. Memory limits were 4 GB for Gurobi, 2,048 MB for the CPLEX-MIP tree, and 16 GB for CPLEX-CP. CPLEX-MIP had to close its optimality gap. Running-time comparisons use these settings.

Two-phase search stops each Phase-I call after $B_F=50{,}000$ SAT conflicts. The same limit is used for every graph, station count, and power limit. Both strategies use the same 3,600-second time limit.

\subsection{Optimality criteria and performance measures}

A cycle time $C$ is proved optimal if $C-1$ is infeasible (\Status{UNSAT}) or $C$ reaches a valid lower bound. Otherwise it is only the best solution found. The best-known cycle time (BKS), $C^{\mathrm{BKS}}$, is the smallest value in the SAT, MIP, and CP results; it is called optimal only when proved. The relative percentage deviation (RPD) of $UB_{\mathrm{sol}}$ from the BKS is
\begin{equation}
\operatorname{RPD}_{\mathrm{BKS}}=
100\frac{UB_{\mathrm{sol}}-C^{\mathrm{BKS}}}{C^{\mathrm{BKS}}}.
\end{equation}
An RPD of zero denotes a BKS match, and a smaller RPD is better.

When the bounds and cycle time become equal, optimality is counted when that solution was found.

Mean RPD is computed only over runs that return a feasible schedule. For PAR-2, a run that proves optimality contributes its running time; every other run contributes 7,200 seconds, twice the time limit. Geometric-mean (GM) time ratios use cases solved to optimality by both methods. Cumulative plots show solved cases over time.

In the tables, Solution found counts feasible solutions, BKS matches means equality with the BKS, and Solved to optimality uses the rules above. Counts are out of 72 unless stated. Boldface marks the best comparable value within each limit, except when every method ties.

All 72 standard SALBP-2 baseline values are proven optima. A reported cycle time difference $\Delta C$ is the increase relative to the corresponding baseline.

\subsection{Research questions}

The experiments address four research questions.

\emph{RQ1.} How do the two peak power limits change the best-known cycle time relative to the standard SALBP-2 optimum, and how does this change vary with $s_Q$?

\emph{RQ2.} How do the selected SAT configurations compare with the MIP and CP solvers in solution quality, cases solved to optimality, and running time?

\emph{RQ3.} How do Direct and Two-phase search compare in solution quality, cases solved to optimality, and running time?

\emph{RQ4.} How do the three ways of preventing overlap affect model size and running time, and how do the methods using $E^+$ compare with those using $E$?

For RQ2, one of the six SAT configurations on $E$ is selected separately under each power limit by applying four rules in order: solve the most cases to optimality, minimize PAR-2, match the BKS in the most cases, and minimize mean RPD. Table~\ref{tab:sat-main-results} reports all six configurations.

Cases are also grouped by task count $n$ and tasks per station $n/m$. The third measure is
\[
\rho_E=\left\lceil\frac{\sum_iw_it_i}{\Qmax}\right\rceil/C^*_0.
\]
The first two are common SALBP difficulty measures \citep{AlvarezMiranda2023Analysis}. The ratio $\rho_E$ compares the power-based lower bound with the standard optimum $C^*_0$. Values above 1 and 1.25 mean that this bound exceeds $C^*_0$ and exceeds it by more than 25\%, respectively; these are not formal difficulty thresholds.

\section{Results and discussion}
\label{sec:results}

The results first quantify the cycle-time effects of the two limits. They then select the SAT configurations for comparison with the commercial solvers, examine the search and non-overlap formulations, and assess performance by benchmark characteristics.

\subsection{RQ1: effect of peak power limits on cycle time}

Table~\ref{tab:cap-impact} compares the BKS under each power limit with the standard SALBP-2 baseline. Under $Q_{\max}^{\mathrm{top}}$, six cases equal the baseline and 66 exceed it; the mean increase is 16.49\% and the median is 14.29\%. Under $Q_{\max}^{\mathrm{avg}}$, all 72 cases exceed the baseline; the mean increase is 61.02\% and the median is 62.20\%. The top-$m$-based limits have $\alpha_Q=0.488$--$0.500$. The average-based limits have $\alpha_Q=0.098$--$0.442$ (median 0.303) and are tighter in every case.

\begin{table}[H]
\centering
\caption{Best-known cycle time increase relative to the standard SALBP-2 optimum.}
\label{tab:cap-impact}
\footnotesize
\renewcommand{\arraystretch}{1.15}
\setlength{\tabcolsep}{3.5pt}
\begin{tabular*}{\textwidth}{@{\extracolsep{\fill}} l c c c c c c c @{}}
\toprule
Peak limit & Cases & \shortstack{BKS proved\\optimal} & \shortstack{Same as\\baseline} & \shortstack{Above\\baseline} & \shortstack{Mean\\$\Delta C$} & \shortstack{Mean $\Delta C$\\(\%)} & \shortstack{Median $\Delta C$\\(\%)} \\
\midrule
$Q^{\mathrm{top}}_{\max}$ & 72 & 58 & 6 & 66 & 7.07 & 16.49 & 14.29 \\
$Q^{\mathrm{avg}}_{\max}$ & 72 & 50 & 0 & 72 & 30.33 & 61.02 & 62.20 \\
\bottomrule
\end{tabular*}
\par\vspace{2pt}\parbox{\textwidth}{\scriptsize\textit{Note:} The relative-tightness ranges are $s_Q=0.500$--$0.512$ for $Q^{\mathrm{top}}_{\max}$ and $s_Q=0.558$--$0.902$ for $Q^{\mathrm{avg}}_{\max}$.}
\end{table}

The average-based BKS is never smaller than the top-$m$-based BKS; Figure~\ref{fig:cap-effect}(a) compares them case by case. Panel~(b) shows that $s_Q$ alone does not explain the increase. The BKS is proved optimal in 58 and 50 cases, respectively; the other values are unproved best solutions. Table~\ref{tab:weighting-sensitivity} compares equal case and graph weighting.

\begin{figure}[H]
\centering
\begin{minipage}{0.49\textwidth}
\centering
\begin{tikzpicture}
\begin{axis}[width=\linewidth,height=57mm,xmin=0,xmax=95,ymin=0,ymax=95,xtick={0,20,40,60,80},ytick={0,20,40,60,80},xlabel={Increase under $Q^{\mathrm{top}}_{\max}$ (\%)},ylabel={Increase under $Q^{\mathrm{avg}}_{\max}$ (\%)},grid=major,grid style={gray!22,densely dotted},tick label style={font=\scriptsize},label style={font=\small}]
\addplot[domain=0:95,samples=2,no marks,gray!60,densely dashed] {x};
\addplot[only marks,mark=*,mark size=1.6pt,mark options={draw=PlotBlue!90!black,fill=PlotBlue,fill opacity=0.6,draw opacity=0.95}] coordinates {(35.294118,47.058824) (5.882353,55.882353) (3.125000,53.125000) (14.285714,60.714286) (11.111111,55.555556) (2.127660,51.063830) (0.000000,53.658537) (6.250000,68.750000) (9.090909,70.454545) (10.000000,62.500000) (4.166667,62.500000) (6.349206,66.666667) (11.111111,77.777778) (1.169591,56.432749) (5.859375,51.562500) (10.243902,58.536585) (23.255814,73.643411) (6.250000,31.250000) (8.333333,41.666667) (10.000000,40.000000) (11.111111,33.333333) (42.857143,42.857143) (7.692308,30.769231) (10.000000,30.000000) (0.000000,25.000000) (33.333333,33.333333) (14.285714,76.190476) (15.000000,80.000000) (15.789474,84.210526) (16.666667,77.777778) (23.529412,82.352941) (25.000000,87.500000) (20.000000,80.000000) (21.428571,78.571429) (23.076923,76.923077) (25.000000,75.000000) (27.272727,72.727273) (0.000000,13.978495) (1.612903,22.580645) (0.000000,18.750000) (0.000000,40.000000) (10.000000,40.000000) (28.571429,28.571429) (50.000000,50.000000) (5.714286,48.571429) (14.285714,61.904762) (14.285714,64.285714) (7.142857,57.142857) (3.125000,43.750000) (4.761905,61.904762) (12.500000,68.750000) (11.764706,58.823529) (16.129032,61.290323) (21.428571,67.857143) (26.923077,69.230769) (24.000000,64.000000) (0.000000,41.538462) (2.127660,53.191489) (7.317073,58.536585) (24.324324,82.882883) (25.000000,82.692308) (27.173913,84.782609) (27.380952,82.142857) (30.379747,84.810127) (30.263158,84.210526) (30.136986,83.561644) (33.333333,85.507246) (34.848485,86.363636) (34.375000,85.937500) (35.000000,83.333333) (37.500000,83.928571) (39.622642,83.018868)};
\end{axis}
\end{tikzpicture}
\par\smallskip\footnotesize (a) Within-case comparison
\end{minipage}
\hfill
\begin{minipage}{0.49\textwidth}
\centering
\begin{tikzpicture}
\begin{axis}[width=\linewidth,height=57mm,xmin=0.48,xmax=0.92,ymin=0,ymax=95,xtick={0.5,0.6,0.7,0.8,0.9},ytick={0,20,40,60,80},xlabel={Relative tightness $s_Q$},ylabel={BKS increase (\%)},yticklabel style={font=\scriptsize},xticklabel style={font=\scriptsize},label style={font=\small},grid=major,grid style={gray!22,densely dotted},legend style={font=\scriptsize,at={(0.02,0.98)},anchor=north west,draw=gray!40,fill=white,fill opacity=0.95}]
\addplot[only marks,mark=*,mark size=1.6pt,mark options={draw=PlotBlue!90!black,fill=PlotBlue,fill opacity=0.7}] coordinates {(0.500000,35.294118) (0.500000,5.882353) (0.500000,3.125000) (0.501241,14.285714) (0.500000,11.111111) (0.500000,2.127660) (0.500000,0.000000) (0.501155,6.250000) (0.500000,9.090909) (0.500914,10.000000) (0.500000,4.166667) (0.501618,6.349206) (0.501425,11.111111) (0.505376,1.169591) (0.500000,5.859375) (0.502732,10.243902) (0.501650,23.255814) (0.507246,6.250000) (0.505051,8.333333) (0.500000,10.000000) (0.500000,11.111111) (0.500000,42.857143) (0.500000,7.692308) (0.500000,10.000000) (0.500000,0.000000) (0.503067,33.333333) (0.500000,14.285714) (0.500000,15.000000) (0.500000,15.789474) (0.500000,16.666667) (0.500000,23.529412) (0.500412,25.000000) (0.500000,20.000000) (0.500000,21.428571) (0.500000,23.076923) (0.500000,25.000000) (0.500000,27.272727) (0.500000,0.000000) (0.500000,1.612903) (0.503546,0.000000) (0.512195,0.000000) (0.500000,10.000000) (0.500000,28.571429) (0.503497,50.000000) (0.505376,5.714286) (0.502857,14.285714) (0.501742,14.285714) (0.500000,7.142857) (0.503650,3.125000) (0.500000,4.761905) (0.500000,12.500000) (0.500000,11.764706) (0.500000,16.129032) (0.500000,21.428571) (0.501529,26.923077) (0.500000,24.000000) (0.500000,0.000000) (0.500000,2.127660) (0.502326,7.317073) (0.500876,24.324324) (0.500000,25.000000) (0.500730,27.173913) (0.500659,27.380952) (0.500000,30.379747) (0.500000,30.263158) (0.500000,30.136986) (0.500554,33.333333) (0.500000,34.848485) (0.500000,34.375000) (0.500000,35.000000) (0.500000,37.500000) (0.500000,39.622642)};
\addlegendentry{$Q^{\mathrm{top}}_{\max}$}
\addplot[only marks,mark=triangle*,mark size=1.9pt,mark options={draw=PlotOrange!90!black,fill=PlotOrange,fill opacity=0.7}] coordinates {(0.614458,47.058824) (0.718391,55.882353) (0.704787,53.125000) (0.694789,60.714286) (0.683721,55.555556) (0.759843,51.063830) (0.743056,53.658537) (0.729792,68.750000) (0.725738,70.454545) (0.714808,62.500000) (0.751880,62.500000) (0.747573,66.666667) (0.740741,77.777778) (0.795699,56.432749) (0.760870,51.562500) (0.737705,58.536585) (0.699670,73.643411) (0.768116,31.250000) (0.717172,41.666667) (0.674603,40.000000) (0.646667,33.333333) (0.593750,42.857143) (0.791667,30.769231) (0.734694,30.000000) (0.654930,25.000000) (0.558282,33.333333) (0.700617,76.190476) (0.698413,80.000000) (0.696360,84.210526) (0.691203,77.777778) (0.689024,82.352941) (0.684774,87.500000) (0.677591,80.000000) (0.670940,78.571429) (0.663539,76.923077) (0.654182,75.000000) (0.641204,72.727273) (0.875000,13.978495) (0.781250,22.580645) (0.737589,18.750000) (0.902439,40.000000) (0.768293,40.000000) (0.619048,28.571429) (0.559441,50.000000) (0.817204,48.571429) (0.748571,61.904762) (0.707317,64.285714) (0.706215,57.142857) (0.810219,43.750000) (0.759259,61.904762) (0.731034,68.750000) (0.702290,58.823529) (0.690141,61.290323) (0.679739,67.857143) (0.669725,69.230769) (0.658046,64.000000) (0.808824,41.538462) (0.752632,53.191489) (0.734884,58.536585) (0.688266,82.882883) (0.685246,82.692308) (0.677372,84.782609) (0.670619,82.142857) (0.667085,84.810127) (0.664663,84.210526) (0.661290,83.561644) (0.658915,85.507246) (0.655650,86.363636) (0.653292,85.937500) (0.647399,83.333333) (0.640255,83.928571) (0.633218,83.018868)};
\addlegendentry{$Q^{\mathrm{avg}}_{\max}$}
\end{axis}
\end{tikzpicture}
\par\smallskip\footnotesize (b) Relative increase versus $s_Q$
\end{minipage}
\caption{Best-known cycle time increases under the two peak power limits. In panel~(a), the average-based value is larger in 68 cases and equal in 4; the dashed line denotes equality. Panel~(b) shows all 144 results: 72 cases under each of the two limits.}
\label{fig:cap-effect}
\end{figure}

\FloatBarrier
\subsection{SAT configurations used in the solver comparison}

Table~\ref{tab:sat-main-results} reports all six SAT configurations; each finds a feasible solution for all 72 cases. Under $Q_{\max}^{\mathrm{top}}$, the SS and SS-D configurations solve 57 cases to optimality, compared with 56 for CSE. Among the configurations tied at 57, D-SS-D/$E$ has the lowest PAR-2 (1598.6 seconds) and is selected.

\begin{table}[H]
\centering
\caption{Overall performance of all six Direct and Two-phase SAT configurations on $E$.}
\label{tab:sat-main-results}
\footnotesize
\renewcommand{\arraystretch}{1.15}
\setlength{\tabcolsep}{5pt}
\begin{tabular*}{\textwidth}{@{\extracolsep{\fill}} l c c c c @{}}
\toprule
Configuration & \shortstack{BKS\\matches} & \shortstack{Solved to\\optimality} & \shortstack{Mean\\RPD (\%)} & PAR-2 (s) \\
\midrule
\multicolumn{5}{@{}l}{\textit{Top-$m$-based limit ($Q^{\mathrm{top}}_{\max}$)}} \\
\addlinespace[2pt]
D-CSE/$E$ & 69 / 72 & 56 / 72 & 0.09 & 1717.4 \\
2P-CSE/$E$ & 69 / 72 & 56 / 72 & 0.09 & 1767.8 \\
\addlinespace[1pt]
D-SS/$E$ & \textbf{71 / 72} & \textbf{57 / 72} & \textbf{0.01} & 1612.9 \\
2P-SS/$E$ & 70 / 72 & \textbf{57 / 72} & 0.02 & 1624.3 \\
\addlinespace[1pt]
D-SS-D/$E$ & \textbf{71 / 72} & \textbf{57 / 72} & \textbf{0.01} & \textbf{1598.6} \\
2P-SS-D/$E$ & \textbf{71 / 72} & \textbf{57 / 72} & \textbf{0.01} & 1648.1 \\
\addlinespace[4pt]
\multicolumn{5}{@{}l}{\textit{Average-based limit ($Q^{\mathrm{avg}}_{\max}$)}} \\
\addlinespace[2pt]
D-CSE/$E$ & 63 / 72 & 42 / 72 & 0.20 & 3037.5 \\
2P-CSE/$E$ & 62 / 72 & 44 / 72 & 0.18 & 2961.4 \\
\addlinespace[1pt]
D-SS/$E$ & \textbf{65 / 72} & 44 / 72 & 0.19 & 2930.2 \\
2P-SS/$E$ & 63 / 72 & 45 / 72 & 0.12 & 2850.7 \\
\addlinespace[1pt]
D-SS-D/$E$ & 64 / 72 & 45 / 72 & 0.19 & 2851.1 \\
2P-SS-D/$E$ & \textbf{65 / 72} & \textbf{47 / 72} & \textbf{0.09} & \textbf{2766.2} \\
\bottomrule
\end{tabular*}
\end{table}

Under $Q_{\max}^{\mathrm{avg}}$, 2P-SS-D/$E$ solves the most cases to optimality (47). It also has the lowest mean RPD (0.09\%) and PAR-2 (2766.2 seconds), so it is selected.

\FloatBarrier
\subsection{RQ2: comparison with MIP and CP solvers}

The preceding analysis selects D-SS-D/$E$ for $Q_{\max}^{\mathrm{top}}$ and 2P-SS-D/$E$ for $Q_{\max}^{\mathrm{avg}}$. Table~\ref{tab:solver-overall} compares them with the commercial solvers on all 72 cases. Table~\ref{tab:solver-paired} compares running times only when both SAT and the commercial solver prove the same optimum. The next two columns count cases solved to optimality by only one method. Let $t_{\mathrm{solver}}$ and $t_{\mathrm{SAT}}$ denote their running times. A GM ratio $t_{\mathrm{solver}}/t_{\mathrm{SAT}}$ above one means that SAT is faster.

\begin{table}[H]
\centering
\caption{Overall performance of the selected SAT configuration and the commercial MIP and CP solvers.}
\label{tab:solver-overall}
\footnotesize
\renewcommand{\arraystretch}{1.10}
\setlength{\tabcolsep}{4pt}
\begin{tabular*}{\textwidth}{@{\extracolsep{\fill}} l c c c c c @{}}
\toprule
Method & \shortstack{Solution\\found} & \shortstack{BKS\\matches} & \shortstack{Solved to\\optimality} & \shortstack{Mean\\RPD (\%)} & PAR-2 (s) \\
\midrule
\multicolumn{6}{@{}l}{\textit{Top-$m$-based limit ($Q^{\mathrm{top}}_{\max}$)}} \\
\addlinespace[2pt]
Direct SAT (SS-D/$E$) & \textbf{72 / 72} & \textbf{71 / 72} & \textbf{57 / 72} & \textbf{0.01} & \textbf{1598.6} \\
Gurobi-MIP & 44 / 72 & 41 / 72 & 32 / 72 & 0.25 & 4063.4 \\
CPLEX-MIP & 28 / 72 & 25 / 72 & 25 / 72 & 0.76 & 4751.8 \\
CPLEX-CP & 16 / 72 & 16 / 72 & 16 / 72 & -- & 5603.1 \\
\addlinespace[4pt]
\multicolumn{6}{@{}l}{\textit{Average-based limit ($Q^{\mathrm{avg}}_{\max}$)}} \\
\addlinespace[2pt]
Two-phase SAT (SS-D/$E$) & \textbf{72 / 72} & \textbf{65 / 72} & \textbf{47 / 72} & \textbf{0.09} & \textbf{2766.2} \\
Gurobi-MIP & 41 / 72 & 24 / 72 & 22 / 72 & 1.50 & 5049.2 \\
CPLEX-MIP & 33 / 72 & 24 / 72 & 20 / 72 & 3.35 & 5218.6 \\
CPLEX-CP & 14 / 72 & 14 / 72 & 14 / 72 & -- & 5800.1 \\
\bottomrule
\end{tabular*}
\par\vspace{2pt}\parbox{\textwidth}{\scriptsize\textit{Note:} Mean RPD uses the feasible solutions counted in Solution found. It cannot be reported for CPLEX-CP because its nonoptimal solutions are unavailable.}
\end{table}
\begin{table}[H]
\centering
\caption{SAT--solver comparison on cases solved to optimality by both methods; $t_{\mathrm{solver}}/t_{\mathrm{SAT}}>1$ favors SAT.}
\label{tab:solver-paired}
\footnotesize
\renewcommand{\arraystretch}{1.10}
\setlength{\tabcolsep}{4pt}
\begin{tabular*}{\textwidth}{@{\extracolsep{\fill}} l c c c c c @{}}
\toprule
Solver & \shortstack{Both prove\\optimum} & \shortstack{SAT faster\\ / slower} & SAT only & \shortstack{Solver\\only} & \shortstack{GM ratio\\solver / SAT} \\
\midrule
\multicolumn{6}{@{}l}{\textit{Top-$m$-based limit ($Q^{\mathrm{top}}_{\max}$)}} \\
\addlinespace[2pt]
Gurobi-MIP & 31 & 31 / 0 & 26 & 1 & 37.12 \\
CPLEX-MIP & 25 & 25 / 0 & 32 & 0 & 98.34 \\
CPLEX-CP & 16 & 16 / 0 & 41 & 0 & 32.20 \\
\addlinespace[4pt]
\multicolumn{6}{@{}l}{\textit{Average-based limit ($Q^{\mathrm{avg}}_{\max}$)}} \\
\addlinespace[2pt]
Gurobi-MIP & 22 & 21 / 1 & 25 & 0 & 20.52 \\
CPLEX-MIP & 20 & 19 / 1 & 27 & 0 & 35.72 \\
CPLEX-CP & 14 & 14 / 0 & 33 & 0 & 19.04 \\
\bottomrule
\end{tabular*}
\end{table}

\begin{figure}[H]
\centering
\pgfplotslegendfromname{solutionqualitylegend}
\par\smallskip
\begin{minipage}[t]{0.48\textwidth}
\centering
\begin{tikzpicture}
\begin{axis}[xbar stacked,width=57mm,height=38mm,xmin=0,xmax=72,ytick={1,2,3},yticklabels={CPLEX-CP, CPLEX-MIP, Gurobi-MIP},xlabel={Number of cases},xtick={0,18,36,54,72},tick label style={font=\scriptsize},title style={font=\footnotesize\bfseries,yshift=-4pt},title={(a) Top-$m$-based limit ($Q^{\mathrm{top}}_{\max}$)},legend to name=solutionqualitylegend,legend style={font=\scriptsize,legend columns=3,column sep=4pt,draw=gray!30,fill=white,fill opacity=0.95},bar width=10pt,enlarge y limits={abs=0.45},axis line style={gray!50},tick style={gray!50},grid=major,grid style={gray!15,dotted},after end axis/.code={\node[font=\tiny\bfseries,text=black,anchor=center] at (axis cs:8.00,1) {16};\node[font=\tiny\bfseries,text=black,anchor=center] at (axis cs:44.00,1) {56};\node[font=\tiny\bfseries,text=white,anchor=center] at (axis cs:1.50,2) {3};\node[font=\tiny\bfseries,text=black,anchor=center] at (axis cs:15.50,2) {25};\node[font=\tiny\bfseries,text=black,anchor=center] at (axis cs:50.00,2) {44};\node[font=\tiny\bfseries,text=white,anchor=center] at (axis cs:1.50,3) {3};\node[font=\tiny\bfseries,text=black,anchor=center] at (axis cs:23.50,3) {41};\node[font=\tiny\bfseries,text=black,anchor=center] at (axis cs:58.00,3) {28};}]
\addplot+[xbar,fill=PlotBlue!85!white,draw=PlotBlue!90!black,line width=0.4pt] coordinates {(0,1) (3,2) (3,3)};
\addplot+[xbar,fill=gray!25,draw=gray!60,line width=0.4pt] coordinates {(16,1) (25,2) (41,3)};
\addplot+[xbar,fill=PlotOrange!55!white,draw=PlotOrange!90!black,line width=0.4pt] coordinates {(56,1) (44,2) (28,3)};
\legend{SAT better, Same cycle time, Commercial solution unavailable}
\end{axis}
\end{tikzpicture}
\end{minipage}
\hfill
\begin{minipage}[t]{0.48\textwidth}
\centering
\begin{tikzpicture}
\begin{axis}[xbar stacked,width=57mm,height=38mm,xmin=0,xmax=72,ytick={1,2,3},yticklabels={CPLEX-CP, CPLEX-MIP, Gurobi-MIP},xlabel={Number of cases},xtick={0,18,36,54,72},tick label style={font=\scriptsize},title style={font=\footnotesize\bfseries,yshift=-4pt},title={(b) Average-based limit ($Q^{\mathrm{avg}}_{\max}$)},bar width=10pt,enlarge y limits={abs=0.45},axis line style={gray!50},tick style={gray!50},grid=major,grid style={gray!15,dotted},after end axis/.code={\node[font=\tiny\bfseries,text=black,anchor=center] at (axis cs:7.00,1) {14};\node[font=\tiny\bfseries,text=black,anchor=center] at (axis cs:43.00,1) {58};\node[font=\tiny\bfseries,text=white,anchor=center] at (axis cs:4.50,2) {9};\node[font=\tiny\bfseries,text=black,anchor=center] at (axis cs:21.00,2) {24};\node[font=\tiny\bfseries,text=black,anchor=center] at (axis cs:52.50,2) {39};\node[font=\tiny\bfseries,text=white,anchor=center] at (axis cs:8.50,3) {17};\node[font=\tiny\bfseries,text=black,anchor=center] at (axis cs:29.00,3) {24};\node[font=\tiny\bfseries,text=black,anchor=center] at (axis cs:56.50,3) {31};}]
\addplot+[xbar,forget plot,fill=PlotBlue!85!white,draw=PlotBlue!90!black,line width=0.4pt] coordinates {(0,1) (9,2) (17,3)};
\addplot+[xbar,forget plot,fill=gray!25,draw=gray!60,line width=0.4pt] coordinates {(14,1) (24,2) (24,3)};
\addplot+[xbar,forget plot,fill=PlotOrange!55!white,draw=PlotOrange!90!black,line width=0.4pt] coordinates {(58,1) (39,2) (31,3)};
\end{axis}
\end{tikzpicture}
\end{minipage}
\caption{Cycle-time results over all 72 cases. Each row shows cases where SAT finds a lower cycle time, both methods find the same cycle time, or the commercial-solver solution is unavailable in the result files. Labels give case counts, and each row sums to 72. Whenever both solutions are available, SAT never returns a higher cycle time.}
\label{fig:solver-solution-quality}
\end{figure}

D-SS-D/$E$ matches the BKS in 71 of 72 top-$m$ cases; 2P-SS-D/$E$ does so in 65 of 72 average-based cases. The median RPD is zero under both limits. Figure~\ref{fig:solver-solution-quality} compares cycle times and shows when a commercial-solver solution is unavailable. In particular, CPLEX-CP returns the same cycle time as SAT in all 16 top-$m$ and 14 average-based cases for which its solution is available; its solution is unavailable for the remaining 56 and 58 cases, respectively.

Under $Q_{\max}^{\mathrm{top}}$, SAT is faster on average by a factor of 37.12 against Gurobi-MIP on 31 jointly solved cases, 98.34 against CPLEX-MIP on 25, and 32.20 against CPLEX-CP on 16. Under $Q_{\max}^{\mathrm{avg}}$, the corresponding factors are 20.52 on 22 cases, 35.72 on 20, and 19.04 on 14. SAT is faster in all 72 individual comparisons under $Q_{\max}^{\mathrm{top}}$ and in 54 of 56 under $Q_{\max}^{\mathrm{avg}}$. These counts include only cases solved to optimality by both methods, as listed in Table~\ref{tab:solver-paired}. Because the solvers use different numbers of threads, these are not speedups measured with equal thread counts.

Figure~\ref{fig:cactus} shows how many cases are solved to optimality as the running time increases.

\begin{figure}[H]
\centering
\begin{minipage}{0.49\textwidth}
\centering
\begin{tikzpicture}
\begin{axis}[width=\linewidth,height=45mm,xmode=log,xmin=0.003,xmax=3600,ymin=0,ymax=60,ytick={0,15,30,45,60},xlabel={Running time (s)},ylabel={Cases solved optimally},grid=major,grid style={gray!22,densely dotted},legend style={font=\scriptsize,at={(0.5,1.04)},anchor=south,legend columns=2,draw=gray!40,fill=white,fill opacity=0.95},tick label style={font=\scriptsize},label style={font=\small}]
\addplot[const plot,no marks,color=PlotBlue,solid,line width=1.3pt] coordinates {(0.003,0) (0.003714323,1) (0.005048752,2) (0.005061150,3) (0.006953955,4) (0.008439541,5) (0.009927988,6) (0.010047436,7) (0.011590958,8) (0.012493372,9) (0.016264915,10) (0.016921759,11) (0.021923065,12) (0.022077560,13) (0.032545090,14) (0.153965235,15) (0.161544323,16) (0.188275814,17) (0.196120024,18) (0.207574606,19) (0.253833055,20) (0.361470461,21) (0.482437134,22) (0.495467901,23) (0.857336760,24) (1.278254271,25) (1.406626701,26) (2.181280851,27) (2.631514788,28) (3.110255957,29) (4.682889462,30) (5.061407566,31) (5.454530954,32) (6.827005148,33) (7.788818121,34) (7.885977030,35) (9.745141983,36) (11.012096880,37) (17.578290940,38) (20.827795030,39) (23.344389440,40) (25.687845230,41) (26.903459070,42) (27.935665130,43) (29.387986660,44) (29.612439870,45) (58.555401330,46) (60.958471780,47) (66.386426690,48) (91.038785220,49) (100.011111300,50) (103.876433800,51) (158.042104000,52) (583.363004200,53) (722.255469600,54) (1179.167994000,55) (1611.879743000,56) (2087.566562000,57)};
\addlegendentry{Direct SAT (SS-D/$E$)}
\addplot[const plot,no marks,color=PlotOrange,densely dashed,line width=1.1pt] coordinates {(0.003,0) (0.015736103,1) (0.038957596,2) (0.054211378,3) (0.077969074,4) (0.078094244,5) (0.094443798,6) (0.117602825,7) (0.152279139,8) (0.154502153,9) (0.157175779,10) (0.227361441,11) (0.588612795,12) (0.619363069,13) (1.989364386,14) (4.041596889,15) (4.043785572,16) (4.112289906,17) (4.747301102,18) (27.183960200,19) (27.241860630,20) (29.895427940,21) (94.933160540,22) (105.969540800,23) (115.025219900,24) (153.545307200,25) (185.230606300,26) (258.247517600,27) (338.200370300,28) (514.850428300,29) (637.906011300,30) (930.133920000,31) (1123.260260000,32)};
\addlegendentry{Gurobi-MIP}
\addplot[const plot,no marks,color=PlotGreen,dashdotted,line width=1.1pt] coordinates {(0.003,0) (0.103470325,1) (0.113736629,2) (0.145992041,3) (0.160179615,4) (0.174883127,5) (0.249468327,6) (0.331637383,7) (0.336660147,8) (0.371800423,9) (0.509796143,10) (0.684638023,11) (0.715104818,12) (0.908212662,13) (1.468562365,14) (15.349610570,15) (26.812348370,16) (70.230240350,17) (115.213775600,18) (125.066116800,19) (146.385148800,20) (239.075233700,21) (267.683440700,22) (325.308251400,23) (435.926925400,24) (1953.797598000,25)};
\addlegendentry{CPLEX-MIP}
\addplot[const plot,no marks,color=PlotPurple,densely dotted,line width=1.1pt] coordinates {(0.003,0) (0.085296631,1) (0.101630449,2) (0.104367733,3) (0.120749712,4) (0.169259787,5) (0.202065468,6) (0.206754208,7) (0.216585875,8) (0.344884157,9) (0.394920349,10) (0.418884277,11) (0.458345175,12) (0.567824841,13) (1.174881220,14) (33.635968920,15) (185.564700400,16)};
\addlegendentry{CPLEX-CP}
\end{axis}
\end{tikzpicture}
\par\smallskip\footnotesize (a) $Q^{\mathrm{top}}_{\max}$
\end{minipage}
\hfill
\begin{minipage}{0.49\textwidth}
\centering
\begin{tikzpicture}
\begin{axis}[width=\linewidth,height=45mm,xmode=log,xmin=0.003,xmax=3600,ymin=0,ymax=60,ytick={0,15,30,45,60},xlabel={Running time (s)},grid=major,grid style={gray!22,densely dotted},legend style={font=\scriptsize,at={(0.5,1.04)},anchor=south,legend columns=2,draw=gray!40,fill=white,fill opacity=0.95},tick label style={font=\scriptsize},label style={font=\small}]
\addplot[const plot,no marks,color=PlotBlue,solid,line width=1.3pt] coordinates {(0.003,0) (0.005237341,1) (0.006369829,2) (0.007310152,3) (0.009406328,4) (0.010154724,5) (0.011291504,6) (0.018010616,7) (0.018234015,8) (0.019078970,9) (0.021774769,10) (0.026400328,11) (0.046944141,12) (0.066911459,13) (0.067498446,14) (0.213874578,15) (0.268489361,16) (0.272375822,17) (0.454990625,18) (1.245343924,19) (2.661614418,20) (2.792440176,21) (3.964365244,22) (4.345561504,23) (4.930686712,24) (5.948643684,25) (7.408803463,26) (7.431241274,27) (11.953231810,28) (19.901531220,29) (22.627681020,30) (22.932389020,31) (29.488364700,32) (30.921741490,33) (35.005388740,34) (69.129204990,35) (78.708900450,36) (132.696657700,37) (223.988322300,38) (378.665958900,39) (570.995739000,40) (586.137917000,41) (837.176721100,42) (2874.898671000,43) (3106.631385000,44) (3200.426179000,45) (3308.478336000,46) (3586.001296000,47)};
\addlegendentry{2P-SAT (SS-D/$E$)}
\addplot[const plot,no marks,color=PlotOrange,densely dashed,line width=1.1pt] coordinates {(0.003,0) (0.014233112,1) (0.045152426,2) (0.049777985,3) (0.077031136,4) (0.108666658,5) (0.183076143,6) (0.213433027,7) (0.263217449,8) (0.323802948,9) (0.370296955,10) (0.724528074,11) (1.115903378,12) (1.308657169,13) (3.058547735,14) (7.559943438,15) (13.677929160,16) (23.913820980,17) (29.353069780,18) (70.537279370,19) (443.977200000,20) (1103.504360000,21) (1840.455753000,22)};
\addlegendentry{Gurobi-MIP}
\addplot[const plot,no marks,color=PlotGreen,dashdotted,line width=1.1pt] coordinates {(0.003,0) (0.074887753,1) (0.145617247,2) (0.147617817,3) (0.161184788,4) (0.302835226,5) (0.374069214,6) (0.389798164,7) (0.411795378,8) (0.539725065,9) (0.552820683,10) (0.633949518,11) (1.739472628,12) (3.550043344,13) (3.839728355,14) (27.962858680,15) (40.030714270,16) (274.104711100,17) (280.538741800,18) (290.674826600,19) (414.209457200,20)};
\addlegendentry{CPLEX-MIP}
\addplot[const plot,no marks,color=PlotPurple,densely dotted,line width=1.1pt] coordinates {(0.003,0) (0.084149361,1) (0.132140398,2) (0.182720661,3) (0.184361219,4) (0.211303949,5) (0.212219954,6) (0.225722313,7) (0.250324011,8) (0.311642885,9) (0.468177557,10) (0.575646639,11) (0.937197447,12) (1.363550663,13) (1.621778727,14)};
\addlegendentry{CPLEX-CP}
\end{axis}
\end{tikzpicture}
\par\smallskip\footnotesize (b) $Q^{\mathrm{avg}}_{\max}$
\end{minipage}
\caption{Cases solved to optimality within 3,600 seconds. The top-$m$-based panel uses Direct SAT (SS-D/$E$), and the average-based panel uses 2P-SAT (SS-D/$E$). Across both limits, at least one SAT configuration on $E$ solves 52 case-and-limit pairs that none of the commercial solvers solves within the reported time and memory limits. Both panels use the same axes.}
\label{fig:cactus}
\end{figure}

\FloatBarrier
\subsection{RQ3: Direct versus Two-phase search}

Table~\ref{tab:controller-paired} compares the two search strategies. ``Both prove optimum'' and the faster/slower counts use cases solved by both; Direct only and 2P only use cases solved by one strategy. A GM ratio $t_{\mathrm{D}}/t_{\mathrm{2P}}>1$ favors Two-phase, and the final column compares the cycle times found on all 72 cases.

Both strategies start at $\Cinit$, so the results reflect differences between their search procedures. Under $Q_{\max}^{\mathrm{top}}$, Direct and Two-phase solve the same number of cases to optimality for each encoding. Under $Q_{\max}^{\mathrm{avg}}$, Two-phase solves 44, 45, and 47 cases with CSE, SS, and SS-D, compared with 42, 44, and 45 for Direct. Direct is nevertheless faster on more jointly solved cases for all three encodings under both limits.

\begin{table}[H]
\centering
\caption{Direct--Two-phase comparison; $t_{\mathrm{D}}/t_{\mathrm{2P}}>1$ favors Two-phase.}
\label{tab:controller-paired}
\footnotesize
\renewcommand{\arraystretch}{1.10}
\setlength{\tabcolsep}{3.5pt}
\begin{tabular*}{\textwidth}{@{\extracolsep{\fill}} l c c c c c c @{}}
\toprule
Encoding & \shortstack{Both prove\\optimum} & Direct only & 2P only & \shortstack{2P faster\\ / slower} & \shortstack{GM ratio\\D / 2P} & \shortstack{2P cycle time\\better / same / worse} \\
\midrule
\multicolumn{7}{@{}l}{\textit{Top-$m$-based limit ($Q^{\mathrm{top}}_{\max}$)}} \\
\addlinespace[2pt]
CSE & 56 & 0 & 0 & 21 / 35 & 0.775 & 0 / 72 / 0 \\
SS & 57 & 0 & 0 & 20 / 37 & 0.669 & 0 / 71 / 1 \\
SS-D & 57 & 0 & 0 & 11 / 46 & 0.592 & 0 / 72 / 0 \\
\addlinespace[4pt]
\multicolumn{7}{@{}l}{\textit{Average-based limit ($Q^{\mathrm{avg}}_{\max}$)}} \\
\addlinespace[2pt]
CSE & 41 & 1 & 3 & 18 / 23 & 0.735 & 3 / 65 / 4 \\
SS & 42 & 2 & 3 & 16 / 26 & 0.710 & 2 / 66 / 4 \\
SS-D & 44 & 1 & 3 & 21 / 23 & 0.752 & 4 / 65 / 3 \\
\bottomrule
\end{tabular*}
\end{table}

\FloatBarrier
\subsection{RQ4: effects of the non-overlap formulations}

Table~\ref{tab:direct-structural-configurations} reports all six Direct methods. Under $Q_{\max}^{\mathrm{top}}$, they solve 56 or 57 cases to optimality. Replacing $E$ with $E^+$ solves one more case for CSE and one fewer for SS and SS-D; the BKS matches change from 69, 71, and 71 to 70, 69, and 70, respectively. Under $Q_{\max}^{\mathrm{avg}}$, the numbers solved to optimality change by $+3$, $0$, and $+1$, and the BKS matches change by $+1$, $0$, and $+1$.

\begin{table}[H]
\centering
\caption{Performance of the Direct SAT configurations using $E$ and $E^+$.}
\label{tab:direct-structural-configurations}
\footnotesize
\renewcommand{\arraystretch}{1.10}
\setlength{\tabcolsep}{3pt}
\begin{tabular*}{\textwidth}{@{\extracolsep{\fill}} l c c c c c @{}}
\toprule
Configuration & \shortstack{BKS\\matches} & \shortstack{Solved to\\optimality} & \shortstack{Mean\\RPD (\%)} & PAR-2 (s) & \shortstack{Median\\clauses} \\
\midrule
\multicolumn{6}{@{}l}{\textit{Top-$m$-based limit ($Q^{\mathrm{top}}_{\max}$)}} \\
\addlinespace[2pt]
D-CSE/$E$ & 69 / 72 & 56 / 72 & 0.09 & 1717.4 & 422,124 \\
D-SS/$E$ & \textbf{71 / 72} & \textbf{57 / 72} & \textbf{0.01} & 1612.9 & 241,710 \\
D-SS-D/$E$ & \textbf{71 / 72} & \textbf{57 / 72} & \textbf{0.01} & \textbf{1598.6} & \textbf{237,763} \\
\addlinespace[2pt]
D-CSE/$E^+$ & 70 / 72 & \textbf{57 / 72} & 0.02 & 1609.8 & 425,194 \\
D-SS/$E^+$ & 69 / 72 & 56 / 72 & 0.09 & 1673.3 & 244,906 \\
D-SS-D/$E^+$ & 70 / 72 & 56 / 72 & 0.08 & 1686.9 & 240,959 \\
\addlinespace[4pt]
\multicolumn{6}{@{}l}{\textit{Average-based limit ($Q^{\mathrm{avg}}_{\max}$)}} \\
\addlinespace[2pt]
D-CSE/$E$ & 63 / 72 & 42 / 72 & 0.20 & 3037.5 & 417,904 \\
D-SS/$E$ & \textbf{65 / 72} & 44 / 72 & 0.19 & 2930.2 & 231,377 \\
D-SS-D/$E$ & 64 / 72 & 45 / 72 & 0.19 & 2851.1 & \textbf{227,503} \\
\addlinespace[2pt]
D-CSE/$E^+$ & 64 / 72 & 45 / 72 & \textbf{0.18} & 2859.3 & 420,974 \\
D-SS/$E^+$ & \textbf{65 / 72} & 44 / 72 & 0.20 & 2907.7 & 234,446 \\
D-SS-D/$E^+$ & \textbf{65 / 72} & \textbf{46 / 72} & 0.21 & \textbf{2784.2} & 230,572 \\
\bottomrule
\end{tabular*}
\end{table}

For each benchmark case, let $L_{\mathrm{SS}}$ and $L_{\mathrm{CSE}}$ be the clause counts of the two encodings and let $r_L=100(1-L_{\mathrm{SS}}/L_{\mathrm{CSE}})$ denote SS's relative clause-count reduction. Table~\ref{tab:encoding-paired} reports the changes in clause count and running time.

\begin{table}[H]
\centering
\caption{Clause-count changes and running-time ratios for changes in encoding and precedence arcs.}
\label{tab:encoding-paired}
\footnotesize
\renewcommand{\arraystretch}{1.12}
\setlength{\tabcolsep}{5pt}
\begin{tabular*}{\textwidth}{@{\extracolsep{\fill}} l c c c c @{}}
\toprule
& \multicolumn{2}{c}{$Q^{\mathrm{top}}_{\max}$} & \multicolumn{2}{c}{$Q^{\mathrm{avg}}_{\max}$} \\
\cmidrule(lr){2-3}\cmidrule(l){4-5}
Comparison & \shortstack{Median clause\\change (\%)} & \shortstack{GM time\\ratio} & \shortstack{Median clause\\change (\%)} & \shortstack{GM time\\ratio} \\
\midrule
\multicolumn{5}{@{}l}{\textit{Reduction from changing the non-overlap encoding on $E$}} \\
\addlinespace[2pt]
CSE $\rightarrow$ SS ($r_L$) & 34.53 & 1.32 & 34.78 & 1.09 \\
SS $\rightarrow$ SS-D & 1.90 & 0.98 & 1.98 & 0.98 \\
\addlinespace[4pt]
\multicolumn{5}{@{}l}{\textit{Increase from changing the precedence arcs from $E$ to $E^+$}} \\
\addlinespace[2pt]
CSE & 0.58 & 0.929 & 0.58 & 1.055 \\
SS & 1.08 & 0.937 & 1.12 & 1.004 \\
SS-D & 1.09 & 0.973 & 1.14 & 1.050 \\
\bottomrule
\end{tabular*}
\par\vspace{2pt}\parbox{\textwidth}{\scriptsize\textit{Note:} Clause-count changes are medians over all 72 cases. GM time ratios use cases solved to optimality by both methods. Sample sizes for the top-$m$-based/average-based limits are 56/42 (CSE--SS) and 57/44 (SS--SS-D), 56/42 ($E$--$E^+$ for CSE), 56/43 ($E$--$E^+$ for SS), and 56/44 ($E$--$E^+$ for SS-D). A ratio above one means that the method on the right is faster.}
\end{table}

SS reduces the median clause count by about 35\% relative to CSE, and the GM time ratios show that SS is faster under both limits. SS-D reduces the clause count by about 2\% more, but does not improve running time. The methods using $E^+$ have slightly more clauses than those using $E$; their effect on running time depends on the limit and formulation. The Direct and Two-phase comparison uses $E$, while the separate $E^+$ test uses Direct search. The experiment does not isolate why CSE, SS, and SS-D differ.

\FloatBarrier
\subsection{Performance by benchmark characteristics}

Table~\ref{tab:hardness-strata} groups the 72 cases by benchmark characteristics. The selected methods are D-SS-D/$E$ for $Q^{\mathrm{top}}_{\max}$ and 2P-SS-D/$E$ for $Q^{\mathrm{avg}}_{\max}$. BKS matches measure solution quality, while Solved to optimality counts complete proofs within 3,600 seconds. When $n>50$, the selected SAT method matches 23 of 24 BKS values under $Q^{\mathrm{top}}_{\max}$ and 20 of 24 under $Q^{\mathrm{avg}}_{\max}$, while solving 11 and 10 cases to optimality, respectively.

Across the three tasks-per-station groups, the percentages solved to optimality are 80.0\%, 75.0\%, and 92.9\% under the top-$m$-based limit, and 90.0\%, 66.7\%, and 42.9\% under the average-based limit. Thus, only the average-based percentages fall steadily as tasks per station increase. The percentage also falls as the number of tasks increases under both limits: all 17 cases with $n\leq20$ are solved to optimality, compared with 29 of 31 and 11 of 24 larger cases under the top-$m$-based limit, and 20 of 31 and 10 of 24 under the average-based limit. For $\rho_E>1.25$, 1 of 12 cases is solved to optimality under the top-$m$-based limit and 36 of 61 under the average-based limit; both percentages are lower than for smaller $\rho_E$. Because $\rho_E$ depends on $Q_{\max}$, the two limit settings place different cases in each group. These grouped results describe patterns; they do not show that any one characteristic causes difficulty.

\begin{table}[H]
\centering
\caption{BKS matches and cases solved to optimality by the selected SAT configurations, grouped by benchmark characteristics.}
\label{tab:hardness-strata}
\footnotesize
\renewcommand{\arraystretch}{1.15}
\setlength{\tabcolsep}{4pt}
\begin{tabular*}{\textwidth}{@{\extracolsep{\fill}} l c c c c c c @{}}
\toprule
& \multicolumn{3}{c}{$Q^{\mathrm{top}}_{\max}$} & \multicolumn{3}{c}{$Q^{\mathrm{avg}}_{\max}$} \\
\cmidrule(lr){2-4}\cmidrule(l){5-7}
Case group & Cases & \shortstack{BKS\\matches} & \shortstack{Solved to\\optimality} & Cases & \shortstack{BKS\\matches} & \shortstack{Solved to\\optimality} \\
\midrule
\multicolumn{7}{@{}l}{\textit{Tasks per workstation ($n/m$)}} \\
\addlinespace[2pt]
\quad $n/m\leq2$ & 10 & 10 & 8 & 10 & 10 & 9 \\
\quad $2<n/m\leq4$ & 48 & 47 & 36 & 48 & 45 & 32 \\
\quad $n/m>4$ & 14 & 14 & 13 & 14 & 10 & 6 \\
\addlinespace[4pt]
\multicolumn{7}{@{}l}{\textit{Number of tasks ($n$)}} \\
\addlinespace[2pt]
\quad $n\leq20$ & 17 & 17 & 17 & 17 & 17 & 17 \\
\quad $20<n\leq50$ & 31 & 31 & 29 & 31 & 28 & 20 \\
\quad $n>50$ & 24 & 23 & 11 & 24 & 20 & 10 \\
\addlinespace[4pt]
\multicolumn{7}{@{}l}{\textit{Energy-based lower-bound ratio ($\rho_E$)}} \\
\addlinespace[2pt]
\quad $\rho_E<1$ & 14 & 14 & 14 & 1 & 1 & 1 \\
\quad $1\leq\rho_E\leq1.25$ & 46 & 46 & 42 & 10 & 10 & 10 \\
\quad $\rho_E>1.25$ & 12 & 11 & 1 & 61 & 54 & 36 \\
\addlinespace[2pt]
\midrule
\textit{Total} & 72 & 71 & 57 & 72 & 65 & 47 \\
\bottomrule
\end{tabular*}
\end{table}

\begin{table}[H]
\centering
\caption{Effect of the weighting scheme on the cycle-time and optimality summaries.}
\label{tab:weighting-sensitivity}
\footnotesize
\renewcommand{\arraystretch}{1.12}
\setlength{\tabcolsep}{8pt}
\begin{tabular*}{\textwidth}{@{\extracolsep{\fill}} l c c @{}}
\toprule
Weighting scheme & $Q^{\mathrm{top}}_{\max}$ & $Q^{\mathrm{avg}}_{\max}$ \\
\midrule
\multicolumn{3}{@{}l}{\textit{Mean BKS cycle-time increase (\%)}} \\
\addlinespace[2pt]
\quad Cases weighted equally (baseline) & 16.49 & 61.02 \\
\quad Precedence graphs weighted equally & 14.97 & 53.43 \\
\addlinespace[4pt]
\multicolumn{3}{@{}l}{\textit{Cases solved to optimality by selected SAT (\%)}} \\
\addlinespace[2pt]
\quad Cases weighted equally (baseline) & 79.17 & 65.28 \\
\quad Precedence graphs weighted equally & 90.06 & 76.44 \\
\bottomrule
\end{tabular*}
\par\vspace{2pt}\parbox{\textwidth}{\scriptsize\textit{Note:} Equal graph weighting first averages the case-level values within each of the 13 precedence graphs.}
\end{table}

The average-based limit has the larger mean BKS increase and the smaller percentage solved to optimality under both weighting schemes. The percentages differ because the precedence graphs contribute different numbers of station-count cases.

\FloatBarrier
\subsection{Discussion and limitations}

The average-based limit raises the mean BKS by 61.02\%, compared with 16.49\% under the top-$m$-based limit. Cases with similar values of $s_Q$ can still have different cycle-time increases because workstation assignment, precedence relations, task durations, and power demands act together. The scatter plot cannot separate these effects.

The best-performing SAT configurations in these experiments match more BKS values, solve more cases to optimality, and have smaller mean RPD values than the MIP solvers. CPLEX-CP solutions are available only for cases solved to optimality, so its mean RPD cannot be compared. When both methods prove the same optimum, the GM time ratios favor SAT by 32.20--98.34 under the top-$m$-based limit and 19.04--35.72 under the average-based limit. SAT uses one thread, while the commercial solvers use their default settings for multiple threads.

SS has about one-third fewer median clauses than CSE. SS-D reduces the count by less than 2\% more, but does not consistently reduce running time. Using $E^+$ adds 0.58--1.14\% more median clauses, and its effect on running time varies by limit and formulation. Both strategies start at $\Cinit$. Two-phase solves more cases to optimality under the average-based limit and the same number under the top-$m$-based limit. The experiment does not isolate whether this difference arises from the conflict-limited Phase I or the subsequent incremental search.

Thus, the limit construction affects both cycle time and the difficulty of proving optimality, while clause count alone does not determine running time.

The study uses one fixed power vector per graph, integer task durations, constant power during each task, fixed numbers of workstations, and discrete time. It does not test measured or time-varying power, other power distributions, a variable number of workstations, or continuous time. Moreover, RQ2 uses SAT configurations selected on the same cases used for the solver comparison. The results therefore describe these test cases and do not show how the selected configurations would perform on new cases.

\section{Conclusion}

This study introduces \Problem{}, the first known SALBP formulation to fix both the number of workstations and a line-wide peak power limit while minimizing cycle time. Operationally, it determines the shortest feasible cycle, and hence the highest attainable production rate, without exceeding the prescribed peak power limit. The exact SAT method finds feasible schedules, searches for shorter cycles, and proves optimality when possible.

The experiments show that the tightness of the peak power limit substantially affects both cycle time and the difficulty of proving optimality. The best-performing SAT configurations find a feasible solution for every case, frequently match the best-known cycle time, and, under the reported settings, solve more cases to optimality than each commercial MIP and CP solver. On cases solved to optimality by both SAT and the corresponding commercial solver, SAT is faster in nearly all comparisons. The resulting reproducible benchmark provides a reference for exact optimization in power-aware assembly-line balancing.

The study considers discrete time, fixed task-power vectors, and fixed workstation counts. Peak power constraints also arise in job-shop and flexible job-shop scheduling; adapting the SAT framework to those settings would allow direct comparison with existing scheduling methods. Further extensions within assembly-line balancing may address time-varying power, variable workstation counts, and continuous time.

\section*{Acknowledgements}

The authors thank the maintainers of the Assembly Line Balancing benchmark repository for making the benchmark data publicly available.

\section*{CRediT author contribution statement}

\noindent\emph{Bao Gia Hoang:} Methodology; Software; Validation; Data curation; Visualization; Writing -- original draft.

\noindent\emph{Tuyen Van Kieu:} Conceptualization; Methodology; Formal analysis; Visualization; Writing -- original draft; Writing -- review \& editing.

\noindent\emph{Khanh Van To:} Conceptualization; Investigation; Methodology; Formal analysis; Supervision; Project administration; Writing -- review \& editing.

\section*{Disclosure statement}

The authors report there are no competing interests to declare.

\section*{Declaration of generative AI use}

The authors used OpenAI ChatGPT (GPT-5.6) for language refinement and consistency checks during manuscript preparation. The authors reviewed and verified all outputs and take full responsibility for the accuracy, originality, citations, analyses, and conclusions.

\section*{Funding}

The authors received no specific grant from any funding agency in the public, commercial or not-for-profit sectors for this research.

\bibliographystyle{tfcad}
\bibliography{refs}

\end{document}